\newtheorem{remark}{Remark}
\newtheorem{definition}{Definition}
\newcommand{\cX}{{\mathcal X}}
\newcommand{\cY}{{\mathcal Y}}
\newcommand{\cU}{{\mathcal U}}
\newcommand{\cV}{{\mathcal V}}
\newcommand{\cT}{{\mathcal T}}
\newcommand{\cP}{{\mathcal P}}
\newcommand{\cE}{{\mathcal E}}
\newcommand{\cC}{{\mathcal C}}
\newcommand{\cF}{{\mathcal F}}
\newcommand{\cA}{{\mathcal A}}
\newcommand{\cD}{{\mathcal D}}
\newcommand{\bX}{{\bar X}}
\newcommand{\bY}{{\bar Y}}
\newcommand{\bU}{{\bar U}}
\newcommand{\bV}{{\bar V}}
\newcommand{\tX}{{\tilde X}}
\newcommand{\tY}{{\tilde Y}}
\newcommand{\tU}{{\tilde U}}
\newcommand{\tV}{{\tilde V}}
\newcommand{\mkv}{-\!\!\!\!\minuso\!\!\!\!-}
\newcommand\numberthis{\addtocounter{equation}{1}\tag{\theequation}}
\begin{document}

\begin{frontmatter}

\title{Collaborative Distributed Hypothesis Testing}
\runtitle{Collaborative Distributed HT}

%
\thankstext{t1}{Large Systems and Networks Group (LANEAS),
	CentraleSup\'elec-CNRS-Universit\'e Paris-Sud,
	Gif-sur-Yvette, France.} 
\thankstext{t2}{Laboratoire des Signaux et Syst\`emes (L2S),
	CentraleSup\'elec-CNRS-Universit\'e Paris-Sud,
	Gif-sur-Yvette, France.}
\begin{aug}
\author{\fnms{Gil} \snm{Katz}\thanksref{t1}\ead[label=e1]{gil.katz@CentraleSupelec.fr}},
\author{\fnms{Pablo} \snm{Piantanida}\thanksref{t2}\ead[label=e2]{pablo.piantanida@CentraleSupelec.fr}},
\and
\author{\fnms{M\'erouane} \snm{Debbah}\thanksref{t3}\ead[label=e3]{merouane.debbah@CentraleSupelec.fr}\thanksref{t1}}
\thankstext{t3}{Mathematical and Algorithmic Sciences Lab,
	Huawei France R\&D,
	Paris, France}
\address{\printead{e1,e2,e3}}
\runauthor{G. Katz et al.}
\affiliation{CentraleSup\'elec and Huawei France}
\end{aug}

\runauthor{G. Katz, P. Piantanida and M. Debbah}

\begin{abstract}
A collaborative distributed binary decision problem is considered. Two statisticians are required to declare the correct probability measure of two jointly distributed memoryless process, denoted by $X^n=(X_1,\dots,X_n)$ and $Y^n=(Y_1,\dots,Y_n)$, out of two possible probability measures on finite alphabets, namely $P_{XY}$ and $P_{\bar{X}\bar{Y}}$. The marginal samples given by  $X^n$ and $Y^n$ are assumed to be available at different locations. The statisticians are allowed to exchange limited amount of data over multiple rounds of interactions, which differs from previous work that deals mainly with unidirectional communication. A single round of interaction is considered before the result is generalized to any finite number of communication rounds. A feasibility result is shown, guaranteeing the feasibility of an error exponent for general hypotheses, through information-theoretic methods. The special case of testing against independence is revisited as being an instance of this result for which also an unfeasibility result is proven.  A second special case is studied where zero-rate communication is imposed  (data exchanges grow sub-exponentially with $n$) for which it is shown that interaction does not  improve asymptotic performance.



\end{abstract}

\begin{keyword}[class=MSC]
\kwd[Primary ]{94A24,94A15}
\kwd[; secondary ]{94A13,68P30}
\end{keyword}

\begin{keyword}
\kwd{Binary hypothesis testing}
\kwd{Type I and II error rates}
\kwd{Shannon information}
\kwd{Neyman-Pearson lemma}
\kwd{Coding theorem}
\kwd{Distributed processing}
\kwd{Interactive statistical computing}
\kwd{Exchange rate}
\kwd{Converse}
\end{keyword}

\end{frontmatter}

\section{Introduction}
The field of hypothesis testing (HT) is comprised of different problems, in which the goal is to determine the probability measure (PM) of one or more random variables (RVs), based on a number of available observations. Considering binary HT problems, it is assumed that this choice is made out of two possible hypotheses, denoted the null hypothesis $H_0$ and the alternative hypothesis $H_1$. In this setting, two error events may occur: An error of Type I, with probability $\alpha_n$ (dependent on the number of observations $n$), occurs when the alternative hypothesis $H_1$ is declared while $H_0$ is true. Conversely, an error of Type II with probability $\beta_n$, occurs when $H_0$ is declared despite $H_1$ being true. Often, for fixed $0<\epsilon <1$, the goal is to find the optimal error exponent: 
 \begin{equation}
 E(\epsilon) \coloneqq \liminf\limits_{n \to \infty} -\frac{1}{n}\log \beta_n(\epsilon) \ ,
 \end{equation}
 for a constrained error probability of Type I: $\alpha_n \leq \epsilon$.

Let $\left\{X_i\right\}_{i=1}^\infty$ be an independent and identically distributed (i.i.d) process, commonly refereed to as a \emph{memoryless process}, taking values in a countably finite alphabet $\mathcal{X}$ equipped with probability measures  $P_0$ or $P_1$ defined on the measurable space $(\mathcal{X}, \mathcal{B}_\mathcal{X})$, where $\mathcal{B}_\mathcal{X}=2^{\mathcal{X}}$. Denote  $X^n=(X_1,\dots,X_n)$ the finite block of the process following the product measures $P_0^n$ or $P_1^n$ on $(\mathcal{X}^n, \mathcal{B}_{\mathcal{X}^n})$. Let us denote by $\mathcal{P}(\mathcal{X})$ the family of probability measures in $(\mathcal{X}, \mathcal{B}_\mathcal{X})$, where for every $\mu \in \mathcal{P}(\mathcal{X})$, $f_\mu(x) \coloneqq \frac{d \mu}{d\lambda}(x) = \mu(  \left\{ x\right\})$ is a short-hand for its probability mass function (pmf). The optimal error exponent for the Type II error probability of the binary HT problem is well-known and given by \emph{Stein's Lemma} (see e.g., \cite{Lehmann-2005,Cover-Thomas-1991}) to be: 
  \begin{equation}
 E(\epsilon)= \cD(P_0|| P_1) \ ,\ \, \forall\,0<\epsilon<1
  \end{equation}
  where $P_0$ and $P_1$ are the probability measures implied by hypotheses $H_0$ and $H_1$, respectively, and $\cD(\cdot||\cdot)$ is the \emph{Kullback-Leiber divergence}  satisfying $P_0 \ll P_1$. The optimal exponential rate of decay of the error probability of Type II does not depend on the specific constraint over the error probability of Type I. This property is referred to as \emph{strong unfeasibility}.
  
  \begin{figure}[bt]
  	\centering
  	\includegraphics[trim=3cm 9cm 7.5cm 4cm, clip=true,width=0.6\textwidth]{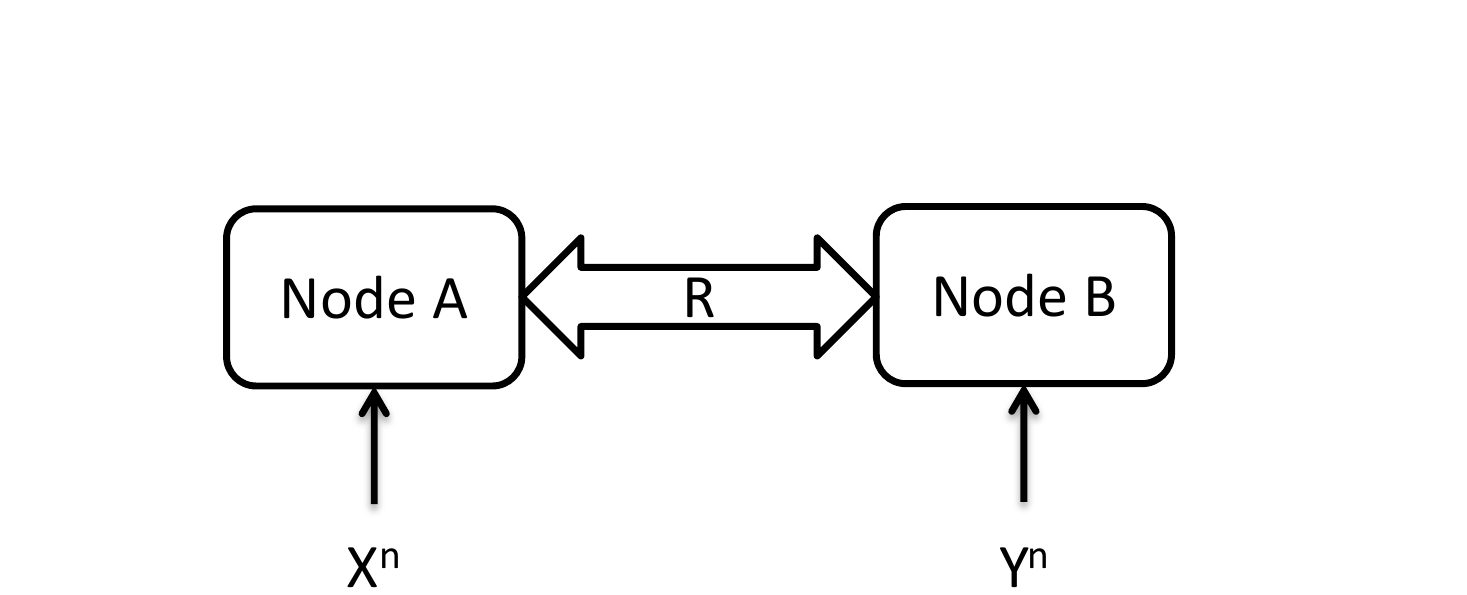}
  	\caption{Collaborative Distributed Hypothesis Testing model.}
  	\label{fig:Model}
  \end{figure}
  
In many scenarios, the realizations of different parts of a random process are available at different physical locations (with different statisticians) in the system (see Fig.~\ref{fig:Model}). Assuming that exchanging  data between the statisticians is possible but costly, a new question arises --for a given constraint over the total amount of data exchange between the nodes, what is the optimal error exponent to the error probability  of Type II, under a fixed constraint over the error probability of Type I? In this paper, we compose together two stories. One is from statistics concerning binary HT originating in the works of Wald~\cite{Wald-1945,Wald-Wolfowitz-48}. The other story is from information theory concerning the case of \emph{unidirectional} data exchanges where only one statistician can share information with the other one due to~\cite{Ahlswede-Csiszar-1986,Han-1987}. We focus on \emph{bidirectional collaborative} binary HT problem. It is assumed that the available resources for interaction can be divided between the statisticians  in any way that would benefit performance, and that without loss of generality no importance is given to the location at which the decision is made -- as the decision can always be transmitted with sub-exponential resources. First, we concentrate on a special case where only one ``round of interaction'' (only a query and its reply) is allowed between the statisticians, i.e., a decision is made after each statistician communicates one statistics, which will be commonly referred to as a message. This scenario was first studied in~\cite{Xiang-Kim-2012} for a special case called \emph{testing against independence}. While the scenario studied in this paper borrows ideas from~\cite{Xiang-Kim-2012}, the mathematical tools are fundamentally different since these rely on the \emph{method of types}~\cite{Csiszar-1998}, as it was the case to deal with general hypothesis in~\cite{Han-1987}. We then extend our result for any finite number of interaction rounds, before showing that this new result for general hypotheses implies the special case of testing against independence,  for which optimality is proven via an \emph{unfeasibility} property.

The remainder of this paper is organized as follows. We finish this introduction with a short summary of related results, before presenting the considered statistical model in Section~\ref{Sec:Model}. In Section~\ref{Sec:MainResult}, we present and prove our first result, being a feasible error exponent for the case of general hypotheses and interactive exchanges, under the assumption of a single communication round. Section~\ref{Sec:ManyRounds} extends this result to any \emph{finite number} of interaction rounds. In Section~\ref{Sec:AgainstIndependence}, we revisit the special case of testing against independence and show that the known exponent for this case is indeed feasible through our general exponent result. Then,  we show an  unfeasibility  property (thus proving optimality, at least in a ``weak'' sense) for the case of a single communication round.  In Section~\ref{Sec:ZeroRate}, we give the optimal error exponent when communication is constrained to be of \emph{zero rate}, meaning that the sizes of the codebooks grows sub-exponentially with the number of observations $n$. 

\subsection{Summary of related works}
Some of the first contributions on binary HT are due to Wald~\cite{Wald-1945,Wald-Wolfowitz-48} where an optimal course of action is given by which a sequential probability ration test (SPRT) is used. It was shown that the expected number of observations required to reach a conclusion is lower than by any other approach, when a similar constraint over the probabilities of error is enforced. Stein's Lemma takes an information-theoretic form since by considering the limit where the number of observations $n \to \infty$, it is shown that the optimal  error exponent for the error probability of Type II, under any fixed constraint over the error probability of Type I, is given by the KL divergence. Later~\cite{Chiyonobu-2001} proves an important property by which when $\alpha_n\equiv \exp(-nc) \to 0$ as $n\to \infty$, then $\beta_n \to 0$ or $\beta_n \to 1$, exponentially depending on the rate of decay $c>0$.

Among the first works that started enforcing constraints on the basic HT problem, which are independent from the statistical nature of the data, are references~\cite{Cover-69,Hellman-Cover-70}. The single-variable HT is considered, and the enforced constraint is related to the \emph{memory} of the system, rather than to communication between different locations. It is assumed that a realistic system cannot hold a large number of observations for future use, and thus at each step a function must be used that would best encapsulate the ``knowledge'' gained from the new observation, combined with the compressed representation of previous observations. This problem was then revisited in~\cite{Yakowitz-1974,Bucklew-Ney-91}, which are motivated by new scenarios in which memory efficiency is an important aspect, such as satellite communication systems. \cite{Bucklew-Ney-91} focuses on the case where both probabilities of error simultaneously decay to zero.

Distributed HT with communication constraints was the focus of the seminal works~\cite{Ahlswede-Csiszar-1986,Han-1987}. Both of them investigated binary decisions in presence of a helper, i.e., unidirectional communication, and propose a feasible  error exponent for $\beta_n$ while enforcing a strict constraint over $\alpha_n$. Although  both of these approaches achieve optimality for the case of testing against independence, where it is assumed that under the alternative hypothesis $H_1$ the samples from $(X,Y)$ are independent with the same marginal measures implied by $H_0$, optimal results for the case of general hypotheses remain allusive until this day. Improving these results by using further randomization of the codebooks, referred to ``random binning'', was first briefly suggested in~\cite{Shimokawa-han-amari-1994} and analyzed thoroughly in~\cite{IT-2016}. In~\cite{Ahlswede-Burnashev-90} a similar scenario is considered for parameter estimation with unidirectional communication. This is a generalization of the binary HT problem where the mean square-error loss was considered instead of exponential decay of the error probability.

A special case referred to as  HT under ``complete data compression'' was studied in~\cite{Han-1987}. In this case, it is assumed that node $A$ is allowed to communicate with node $B$ by sending only one bit of information. A feasible scheme was proposed and its optimality proved. The much broader scenario, by which codebooks are allowed to grow with $n$, but not exponentially fast, was studied in~\cite{Shalaby-Papamarcou-1992}. Interestingly, it was shown that this scenario does not offer any advantage, with relation to complete data compression. This setting, referred to as zero-rate communication, was recently revisited in~\cite{Zhao-Lai-2015} where both $\alpha_n$ and $\beta_n$ are required to decrease exponentially with $n$.

Interactive communication was considered for the problem of distributed binary HT within the framework of testing against independence in~\cite{Xiang-Kim-2012}. In the present paper, we further study this problem in the framework of general hypotheses, as well as revisit the special case of testing against independence via a strong unfeasibility  proof. Other works in recent years evolve the problem of HT in many different directions. Two interesting examples are~\cite{Naghshvar-Javidi-2013} (see references therein), which assumes a tighter control by the statistician throughout the process, allowing him to choose and evaluate the testing procedure through past information, and~\cite{Nussbaum-Szkola-2009} which investigates HT in the framework of quantum statistical models.

\section{Statistical Model and Preliminaries}\label{Sec:Model}

\newcounter{InEquations}
\stepcounter{InEquations}
\newcounter{OutEquations}
\stepcounter{OutEquations}

\subsection{Notation} 
We use upper-case letters to denote random variables (RVs) and lower-case letters to denote realizations of RVs. Vectors are denoted by boldface letters, with their length as a superscript, emitted when it is clear from the context. Sets, including alphabets of RVs, are denoted by calligraphic letters. Throughout this paper we assume all RVs have an alphabet of finite cardinality. $P_X \in \mathcal{P}(\cX)$ denotes a probability measure (PM) for the RV $X\in\mathcal{P}(\mathcal{X})$ defined on the measurable space $(\mathcal{X}, \mathcal{B}_\mathcal{X})$,  that belongs to the set of all possible PMs over $\cX$; $X \mkv Y \mkv Z$ denotes that $X$, $Y$ and $Z$ form a Markov chain. We shall use tools from information theory. Notations generally comply with the ones introduced in~\cite{Csiszar-1998}. Thus, for a RV $X$, distributed by $X \sim P_X(x)$, the  \emph{entropy} is defined to be $H(X) = H(P) \coloneqq -\sum\limits_{x \in \cX} P_X(x)\log P_X(x)$. Similarly,  the  \emph{conditional  entropy}:
$$
H(Y|X) = H(V|P) \coloneqq -\sum_{x \in \cX}\sum_{y \in \cX}  P_X(x) V(y|x)  \log V(y|x) 
$$ 
for a stochastic mapping  $V :\mathcal{X}\mapsto  \mathcal{P}(\mathcal{Y})$. The \emph{conditional Kullback-Leiber (KL) divergence} between two stochastic mappings  $P_{Y|X} :\mathcal{X}\mapsto  \mathcal{P}(\mathcal{Y})$ and $Q_{Y|X}: \mathcal{X}\mapsto \mathcal{P}(\mathcal{Y})$, is:
\begin{equation}
\cD(P_{Y|X}\|Q_{Y|X}| P_{X}) \coloneqq \sum\limits_{x \in \cX}\sum\limits_{y \in \cY}  P_X(x) P_{Y|X} (y|x) \log \frac{P_{Y|X} (y|x)}{Q_{Y|X} (y|x)} \ ,
\end{equation}
satisfying that $P_{Y|X} \ll Q_{Y|X}$ \emph{a.e.} wrt $P_X$.  For any two RVs, $X$ and $Y$, whose measure is controlled by $XY \sim P_{XY}(x,y) = P_X(x)P_{Y|X}(y|x)$, the following is defined to be the \emph{mutual information} between them: $I(X;Y) \coloneqq  \cD(P_{XY}\|P_{X}P_{Y}) $. Given a vector  $\vct{x}=(x_1,\dots,x_n)\in \mathcal{X}^n$, let $N(a|\vct{x})$ be the \emph{counting measure}, i.e., the number of times the letter $a \in \cX$ appears in the vector $X$. The \emph{type} of the vector $\vct{x}$, denoted by $Q_{\vct{x}}$,  is defined through its \emph{empirical measure}: $Q_{\vct{x}}(a) = n^{-1}N(a|\vct{x})$ with $a \in \mathcal{X}$. $\cP_n(\cX)$ denotes the set of all possible types (or empirical measures) of length $n$ over $\cX$.  We use type variables of the form $X^{(n)}\in\mathcal{P}_n(\mathcal{X})$ to denote a RV with a probability measure identical to the empirical measure induced by $\vct{x}$. The set of all vectors $\vct{x}$ that share this type is denoted by $\cT(Q_{\vct{x}}) = \cT_{[Q_{\vct{x}}]}$. Main definitions of $\delta$-\emph{typical sets} and some of their properties, are given in Appendix~\ref{Apen:typicality}. All exponents and logarithms are assumed to be of base $2$.

\subsection{Statistical model and problem statement} 
In a system comprising two statisticians, as depicted in Fig.~\ref{fig:Model}, each of them is assumed to observe the i.i.d. realizations of one random variable. Let $X^nY^n=(X_1,Y_1),\dots,$ $(X_n,Y_n)$  be independent random variables  in $(\cX^n\times\cY^n,\mathcal{B}_{\cX^n\times\cY^n})$ that are jointly distributed in one of two ways, denoted by hypothesis $0$ and $1$, with probability measures as follows: 
\begin{equation}
\left\{\begin{aligned}
& H_0: \quad P_{XY}(x,y)\ , \forall\,(x,y)\in\mathcal{X}\times\mathcal{Y} \ ,\\
& H_1: \quad P_{\bX\bY}(x,y)\ , \forall\,(x,y)\in\mathcal{X}\times\mathcal{Y} \ .
\end{aligned}\right.
\end{equation}
Communication between the two statisticians is assumed to be done in rounds, with node $A$ starting the interaction. These   interactions are limited, however, by a total (exponential) rate $R$ bits per symbol. That is, if each of the nodes sees $n$ realizations, the total amount of bits allowed to exchange data between the nodes before the decision is made is $\exp(nR)$. The data exchange is assumed to be \emph{perfect}, meaning that within the rate limit no errors are introduced by the communication. It is assumed that the total rate can be distributed by the two statisticians in any way that is beneficial to performance. Moreover, we assume that it does not matter \emph{where} the decision is finally made, as its transmission can be done at no cost.

As is the case in the standard centralized HT problem, we consider two error events. An error of the Type I, with probability $\alpha_n$, occurs when $H_1$ is declared despite $H_0$ being true, while an error event of Type II, with probability $\beta_n$, is the opposite error event. The goal is to find the exponential rate: $ -\frac{1}{n}\log \beta_n$  ($n$ being the number of samples) s.t. $\beta_n\to0$ as $n\to\infty$, while fixed constraints are enforced on $\alpha_n$ and the total exchange rate $R$. 

\begin{definition}[K-round collaborative HT]\label{def-main}
A $K$-round decision code for the two node collaborative hypothesis testing system, when each of the statisticians  is allowed to observe $X^n$ and $Y^n$ realizations of $X$ and $Y$, respectively, is defined by a sequence of encoders and a decision mapping:
	\begin{align}
	 f_{[k]}&: \mathcal{X}^n \times \prod_{i=1}^{k-1} \{1,\dots,| g_{[i]}|\}   \longrightarrow \{1,\dots,|  f_{[k]} |\}  \ , \ k=[1:K]\\
	    g_{[k]}& : \mathcal{Y}^n \times \prod_{i=1}^{k}\{1,\dots,| f_{[i]}| \}\longrightarrow\{1,\dots, |  g_{[k]} |\}  \ ,\ k=[1:K]\\
	\phi&: \mathcal{X}^n \times\prod_{i=1}^{K} \{1,\dots,| g_{[i]}| \} \longrightarrow \{0,1\} \ ,
	\end{align}
where $f_{[k]}$ and $g_{[k]}$ are encoder mappings with image sizes satisfying $\log | f_{[i]}|\equiv \mathcal{O}(n)$ and $\log| g_{[i]}|\equiv \mathcal{O}(n) $, respectively, while $\phi$ is the decision mapping. The corresponding Type I and II error probabilities are given by
\begin{align}
\alpha_n(R\,|K)& \coloneqq  \Pr\left[ \phi\big(X^n,g_{[1:K]}\big)=1 \, |  \,X^nY^n\sim P_{XY} \right] \ , \\
\beta_n(R\,|K)& \coloneqq  \Pr\left[ \phi\big(X^n,g_{[1:K]}\big)=0 \, |  \,X^nY^n\sim P_{\bX\bY} \right]  \ .
\end{align}

An  exponent $E$ to the error probability of Type II, constrained  to an error probability of Type I to be below $\epsilon>0$ and a total exchange rate $R$,  is said to be feasible, if for any $\varepsilon >0$ there exists a code satisfying:	
\begin{align}
	 -\frac{1}{n}\log \beta_n(R,\epsilon\,|K)	  &\geq E - \varepsilon   \ ,\\
	\frac{1}{n} \sum\limits_{k=1}^K \log \left(| g_{[k]}||  f_{[k]} | \right) &\leq R +\varepsilon \  , \ \alpha_n(R\,|K)\leq \epsilon\ ,
	\end{align}
	provided that $n$ is large enough. The supremum of all feasible exponents for given $(R,\epsilon)$ is defined to be the optimal error exponent. 
\end{definition}

\section{Collaborative Hypothesis Testing with One Round} \label{Sec:MainResult}
In this section, we present and prove a \emph{feasible error exponent} $ -\frac{1}{n} \log \beta_n(R,\epsilon\,|K=1)$  to the error probability of Type II, under any fixed constraint $\epsilon>0$ on the error probability of Type I for a total exchange rate $R$. Here, we only consider one round of exchange whereby each of the nodes exchanges one statistics (or message) before a decision is made. The extension to the case with multiple exchanging rounds is relegated to the next section.

\begin{Prop}[Sufficient conditions for one round of interaction]\label{Prop:Main} 
Let $\mathscr{S}(R)\subset \mathcal{P}(\mathcal{U}\times\mathcal{V})$ and $\mathscr{L}(U,V) \subset \mathcal{P}(\mathcal{U}\times\mathcal{V}\times\mathcal{X}\times\mathcal{Y})$ denote the sets of probability measures defined in terms of  corresponding RVs: 
	\begin{align}
	\mathscr{S}(R) &\coloneqq \big\{ UV :  \ I(U;X) + I(V;Y|U) \leq R\, \\
	&\quad \quad  U \mkv X \mkv Y\,, \, V \mkv (U,Y) \mkv X\ ,\,\, | \mathcal{U}|, | \mathcal{V}|<+\infty\big\} \ ,\nonumber\\
	&\mathscr{L}(U,V)  \coloneqq\big \{\tilde{U}\tilde{V}\tilde{X}\tilde{Y}\,: \, P_{\tilde{U}\tilde{V}\tilde{X}} =   P_{UVX}\, ,\, P_{\tilde{U}\tilde{V}\tilde{Y}} = P_{UVY} \big\} \ .
	\end{align}
A feasible  error exponent to the error probability of Type II, when the  total exchange rate is $R$ (bits per sample), is given by 
	\begin{align}
	&\lim_{\epsilon \to 0}\,\liminf_{n\to\infty} -\frac{1}{n}  \log \beta_n(R,\epsilon\,|K=1) \geq \\
	&\quad\quad\quad\quad\quad\quad\quad\max\limits_{UV \in \mathscr{S}(R)}\, \min\limits_{\tU\tV\tX\tY \in \mathscr{L}(U,V)} \cD\big(P_{\tU\tV\tX\tY}||P_{\bU\bV\bX\bY}\big) \ .\nonumber
	\end{align} 
\end{Prop}
\begin{proof}
We start by describing the random construction of codebooks, as well as encoding and decision functions. By analyzing the asymptotic properties of such decision systems, we aim at implying a \emph{feasibility (existence) result} of interactive functions and decision regions that satisfy, for any given $\epsilon,\varepsilon>0$, the following inequalities: 
\begin{align}
	 &\frac{1}{n}\log \left(|f_{[1]}| |g_{[1]}|\right) \leq I(U;X) + I(V;Y|U) + \varepsilon \ ,\ 	\alpha_n(R\,|K=1)  \leq \epsilon\ ,\\
	 &-\frac{1}{n}\log \beta_n(R,\epsilon\,|K=1) \geq  \min\limits_{\tU\tV\tX\tY \in \mathscr{L}(U,V)} \cD\big(P_{\tU\tV\tX\tY}||P_{\bU\bV\bX\bY}\big) - \varepsilon \ , 
	\end{align}
provided that $n$ is large enough and for any given pair of random variables $(U,V)\in\mathscr{S}(R)$, where $|f_{[1]}|$ and $|g_{[1]}|$ denote the number of codewords in the codebooks\footnote{Note that \emph{feasibility}  is defined in the information-theoretic sense which implies the  \emph{random existence} of interactive and decision functions with desired properties.} used for interaction.\vspace{1mm}

	\emph{Codebook generation}. Without loss of generality, we assume that node $A$ is the first to communicate. Fix a conditional probability $P_{UV|XY}(u,v|x,y) = P_{U|X}(u|x)P_{V|UY}(v|u,y)$ that attains the maximum in Proposition~\ref{Prop:Main}. Let 
\begin{equation*}
P_U(u) \equiv  \sum\limits_{x \in \cX} P_{U|X}(u|x)P_X(x) \ ,\  P_{V|U}(v|u) \equiv  \sum\limits_{y \in \cY} P_{V|UY}(v|u,y)P_{Y}(y). 
\end{equation*}
For this choice of RVs, set the rates $(R_U,R_V)$ to be
\begin{equation*}
I(U;X) + \epsilon(\delta)\coloneqq  R_U \ , \  I(V;Y|U) + \epsilon(\delta^\prime) \coloneqq R_V 
\end{equation*}
with $\epsilon(\delta) \to 0$ as $\delta \to 0$. By the definition of the set $\mathscr{S}(R)$, it is clear that $R_U + R_V \leq R + \epsilon(\delta) + \epsilon(\delta^\prime)$. Randomly and independently draw $2^{nR_U}$ sequences $\vct{u}=(u_1,\dots,u_n)$, each according to $\prod_{i=1}^n P_U(u_i)$. Index these sequences by $m_U \in [1:M_U \coloneqq 2^{nR_U}]$ to form the random codebook 
$\mathcal{C}_{\vct{u}}\coloneqq \big\{ \vct{u}(m_U): \, m_U \in [1:M_U]\big\}$. As a second step, for each word $\vct{u}\in \mathcal{C}_{\vct{u}}$, build a codebook $\mathcal{C}_{\vct{v}}(m_U)$ by randomly and independently drawing $2^{nR_V}$ sequences $\vct{v}$, each according to  $\prod_{i=1}^n P_{V|U}(v_i|u_i(m_U))$. Index these sequences by $m_V \in [1:M_V \coloneqq  2^{nR_V}]$ to form the collection of codebooks $\mathcal{C}_{\vct{v}}(m_U) \coloneqq \big\{ \vct{v}(m_U,m_V): \, m_V \in [1:M_V]\big\}$ for $m_U \in [1:M_U]$.\vspace{1mm}
 
\emph{Encoding and decision mappings}. Given a sequence $\vct{x}$, node $A$ searches in the codebook $\mathcal{C}_{\vct{u}}$ for an index $m_U$ such that $(\vct{u}(m_U),\vct{x}) \in \cT_{[UX]_\delta}^n$ (note that this notation denotes the $\delta$-typical set with relation to the probability measure implied by $H_0$). If no such index is found, node $A$ declares $H_1$. If more than one sequence is found, node $A$ chooses one at random. Node $A$ then communicates  the chosen index $m_U$ to node $B$, using a portion $R_U$ bits of the available exchange rate. Upon receiving the index $m_U$, node $B$ checks if $(\vct{u}(m_U),\vct{y}) \in \cT_{[UY]_{\delta'}}^n $. If not, node $B$ declares $H_1$. If the received sequence $\vct{u}$ and $\vct{y}$ (the observed sequence at node $B$) are jointly typical, node $B$ looks in the specific codebook $\mathcal{C}_{\vct{v}}(m_U)$, for an index $m_V$ such that $\big(\vct{u}(m_U),\vct{v}(m_U,m_V),\vct{y}\big) \in \cT_{[UVY]_{\delta'}}^n$. If such an index is not found, node $B$ declares $H_1$. If node $B$ finds more than one such index, it chooses one of them at random. Node $B$ then transmits the chosen index $m_V$ to node $A$. Upon reception of the index $m_V$, node $A$ checks if $\big(\vct{u}(m_U),\vct{v}(m_U,m_V),\vct{x}\big) \in \cT_{[UVX]_{\delta''}}^n $. If so, it declares $H_0$ and otherwise, it declares $H_1$. The relation between $\delta,\delta'$ and $\delta''$ can be deducted from Lemma~\ref{Lemma:JointConditionalTypicality}. It is, however, important to emphasize that $\delta'(\delta) \to 0$ as $\delta \to 0$, and $\delta''(\delta') \to 0$ as $\delta' \to 0$ with $n\to \infty$.\vspace{1mm}
	
\emph{Analysis of $\alpha_n$ (Type I)}. The analysis of $\alpha_n$ is identical to the one proposed in \cite{Xiang-Kim-2012}, for the case of testing against independence. We give here a short summary of the analysis available in \cite{Xiang-Kim-2012}. Assuming that the measure that controls $X$ and $Y$ is $P_{XY}$, and denoting the chosen indices at nodes $A$ and $B$ by $m_U$ and $m_V$ respectively, the error probability of the Type I can be expressed as follows 
	\begin{equation}\label{eq:AnalysisAlpha}
	\alpha_n \equiv  \Pr(\cE_1 \cup \cE_2 \cup \cE_3) \leq \Pr(\cE_1) + \Pr (\cE_1^c \cap \cE_2) + \Pr (\cE_1^c \cap \cE_2^c \cap \cE_3) \ ,
	\end{equation}
	where $\cE_1,\cE_2$ and $\cE_3$ represent the following error events:
\begin{align}
\cE_1 &\equiv \bigl\{ (\vct{U}(m_U),\vct{X}) \notin \cT^n_{[UX]\delta}\, \forall \,m_U \in [1:M_U]\bigr\} \ ,\\
\cE_2 &\equiv \bigl\{ (\vct{V}(m_U,m_V),\vct{U}(m_U),\vct{Y}) \notin \cT^n_{[VUY]_{\delta'}} \,\forall \, m_V \in [1:M_V] \\&  \qquad\text{ and the specific $m_U$ selected at node $A$}\bigr\} \ ,\nonumber\\
\cE_3 &\equiv \bigl\{ (\vct{V}(m_U,m_V),\vct{U}(m_U),\vct{X}) \notin \cT^n_{[VUX]_{\delta''}}, \\&  \qquad\text{ for the specific $m_U$ and $m_V$ previously chosen}\bigr\} \ .\nonumber
\end{align}
Analyzing each of the probabilities in \eqref{eq:AnalysisAlpha} separately, $\Pr(\cE_1)\to0$ as $n \to \infty$ by the \emph{covering lemma}~\cite{ElGamal-Kim-2011}, provided that $R_U \geq I(U;X) + \epsilon(\delta)$, with $\epsilon(\delta) \to 0$ as $\delta \to 0$. $\Pr (\cE_1^c \cap \cE_2) \to 0$ when $n \to \infty$ by the \emph{conditional typicality lemma} \cite{ElGamal-Kim-2011}, in addition to the covering lemma, provided that $R_V \geq I(V;Y|U) + \epsilon(\delta')$. Finally, the third term in \eqref{eq:AnalysisAlpha} can be shown to tend to zero through the use of the Markov lemma (see Lemma~\ref{Lemma:markov}), as well as Lemma~\ref{Lemma:ProbabilityByType} and Lemma~\ref{Lemma:JointConditionalTypicality} in Appendix~\ref{Apen:typicality}. Thus, as all three components tend to zero with large $n$, we may conclude that $\alpha_n \leq \epsilon$ for any constraint $0<\epsilon <1$ and $n$ large enough.\vspace{1mm}
	
\emph{Analysis of $\beta_n$ (Type II)}. The error probability of Type II is defined by
	\begin{equation}
	\beta_n(R,\epsilon\,|K=1) \equiv \Pr \big(\textrm{decide $H_0$}|XY \sim P_{\bX\bY}\big) \ .
	\end{equation}
	Thus, we assume that $P_{\bX\bY}$ controls the measure of the observed RVs throughout this analysis. We use similar methods to what was done in \cite{Han-1987}, although we choose to work with random codebooks. The influence of this choice is on the analysis of $\alpha_n$ only, as seen above, and not on $\beta_n$.
	
	For a given pair of sequences $(\vct{x},\vct{y})$ with type variables $X^{(n)}Y^{(n)}\in\mathcal{P}_n(\mathcal{X}\times\mathcal{Y})$, we count all possible events that lead to an error. We notice first, that given a pair of vectors $(\vct{x},\vct{y})\in \mathcal{X}^n\times\mathcal{Y}^n$ the probability that these vectors will be the result of $n$ i.i.d. draws, according to the measure implied by $H_1$, is given by Lemma~\ref{Lemma:ProbabilityByType} to be:
	\begin{equation}
	\Pr \{\bar{X}^n\bar{Y}^n = (\vct{x},\vct{y})\} = \exp \left[-n\left(H(X^{(n)}Y^{(n)}) + \cD(X^{(n)}Y^{(n)}||\bar{X}\bar{Y})\right)\right] \ ,
	\end{equation}  
	where $X^{(n)}Y^{(n)}\in\mathcal{P}_n(\mathcal{X}\times\mathcal{Y})$ are the type variables of the realizations $(\vct{x},\vct{y})$ (see Appendix~\ref{Apen:typicality}). For each pair of codewords $\vct{u}_i\in \mathcal{C}_{\vct{u}}$ and  $\vct{v}_{ij}\in \mathcal{C}_{\vct{v}}(i)$, we define the set: 
	\begin{equation}\label{Eq:S}
	\mathcal{S}_{ij}(\vct{x}) \coloneqq \{\vct{u}_i\}\times\{\vct{v}_{ij}\}\times\mathcal{G}_{ij}\times\{\vct{x}\} \ ,
	\end{equation}
	where $\mathcal{G}_{ij}\subseteq \mathcal{Y}^n$ is the set of all vectors $\vct{y}$ that, given the received message $\vct{u}_i$, will result in the message $\vct{v}_{ij}$ being transmitted back to node $A$. Denoting by $K_{ij}(\vct{x})$ the number of elements $(\vct{u}_i,\vct{v}_{ij},\vct{x},\vct{y}) \in\mathcal{S}_{ij}(\vct{x})$ whose type variables coincide with $U^{(n)}V^{(n)}X^{(n)}Y^{(n)}$, we have by Lemma~\ref{Lemma:SizeTypicalSet} that: 
	\begin{equation}\label{Eq:Kij}
	K_{ij}(\vct{x}) \leq \exp \left[nH(Y^{(n)}|U^{(n)}V^{(n)}X^{(n)})\right] \ .
	\end{equation}
	Let  $K(U^{(n)}V^{(n)}X^{(n)}Y^{(n)})$ denote the number of all elements:
	$$
	(\vct{u},\vct{v},\vct{x},\vct{y}) \in \mathscr{S}_n \coloneqq \bigcup_{i=1}^{M_U} \bigcup_{j=1}^{M_V}\, \bigcup_{\vct{x} \in \mathcal{T}_{[X|\vct{u}_i\vct{v}_{ij}]_{\delta''}}^n}\mathcal{S}_{ij}(\vct{x})
	$$ 
	that have type variable $U^{(n)}V^{(n)}X^{(n)}Y^{(n)}\in\mathcal{P}_n(\mathcal{U}\times\mathcal{V}\times\mathcal{X}\times\mathcal{Y})$, then
	\begin{equation}\label{Eq:K}
	\begin{aligned}
	K( U^{(n)} V^{(n)}X^{(n)}Y^{(n)}) &\leq \sum\limits_{i=1}^{M_U}\sum\limits_{j=1}^{M_V} \exp \left[nH(Y^{(n)}|U^{(n)}V^{(n)}X^{(n)})\right]  \! \big|\mathcal{T}_{[X|\vct{u}_i\vct{v}_{i,j}]_{\delta^{\prime\prime}}}^n\big| \\
	&\leq \exp \left[n\left(H(Y^{(n)}|U^{(n)}V^{(n)}X^{(n)}) \right.\right.\\ & \qquad\quad+ I(U;X) + I(V;Y|U) + H(X|UV) + \mu_n\Big)\Big]  \ ,
	\end{aligned}
	\end{equation}
	where $M_U$ and $M_V$ are the sizes of the codebooks $\mathcal{C}_{\vct{u}}$ and $\mathcal{C}_{\vct{v}}(\cdot)$. The first and second additional terms in the final expression come from the size of the codebooks and the third is a bound over the size of the delta-typical set (see Lemma~\ref{Lemma:SizeDeltaTypicalSet}). The resulting sequence $\mu_n$ is a function of $\delta,\delta',\delta''$ that complies with $\mu_n \to 0$ as $n \to \infty$. The error probability of  Type II satisfies: 
	\begin{equation}\label{eq:BoundOnBetaSum}
	\begin{aligned}
	\beta_n(R,\epsilon\,|K=1) \leq \sum\limits_{U^{(n)}V^{(n)}X^{(n)}Y^{(n)}\in{\mathscr{S}_n}} \!\!\!\!\exp \left[-n\left(k(U^{(n)}V^{(n)}X^{(n)}Y^{(n)}) -\mu_n\right)\right] \ ,
	\end{aligned}
	\end{equation}
	where the function $k(U^{(n)}V^{(n)}X^{(n)}Y^{(n)})$ is defined by 
	\begin{equation}\label{Eq:k}
	\begin{aligned}
	k(U^{(n)}V^{(n)}X^{(n)}Y^{(n)}) &\coloneqq  H(X^{(n)}Y^{(n)}) + \cD(X^{(n)}Y^{(n)}|| \bar{X}\bar{Y})\\ &\quad  - H(Y^{(n)}|U^{(n)}V^{(n)}X^{(n)})-H(X|UV) \\ &\quad - I(U;X) - I(V;Y|U) \ .
	\end{aligned}
	\end{equation}
	
	We deliberately made an abuse of notation in  \eqref{eq:BoundOnBetaSum} to indicate that the sum  is taken over all possible type-variables $U^{(n)}V^{(n)}X^{(n)}Y^{(n)}\in\mathcal{P}_n(\mathcal{U}\times\mathcal{V}\times\mathcal{X}\times\mathcal{Y})$ formed by empirical probability measures from elements $(\vct{u},\vct{v},\vct{x},\vct{y})\in\mathscr{S}_n$.
	
	From the construction of $\mathscr{S}_n$, it is clear that if $(\vct{u},\vct{v},\vct{x},\vct{y}) \in \mathscr{S}_n$, then at least $(\vct{u},\vct{v},\vct{x}) \in \cT_{[UVX]_{\delta''}}^n$ and $(\vct{u},\vct{v},\vct{y}) \in \cT_{[UVY]_{\delta'}}^n$. Thus, the summation in \eqref{eq:BoundOnBetaSum} is only over all types satisfying:
	\begin{equation}\label{eq:TypicalForBeta}
	\begin{aligned}
	&|P_{U^{(n)}V^{(n)}X^{(n)}}(u,v,x) - P_{UVX}(u,v,x)| \leq \delta''  \ ,\\
	&|P_{U^{(n)}V^{(n)}Y^{(n)}}(u,v,y) - P_{UVY}(u,v,y)| \leq \delta'  \ ,
	\end{aligned}
	\end{equation}
	for all $(u,v,x)\in\textrm{supp}(P_{UVX})$ and $(u,v,y)\in\textrm{supp}(P_{UVY})$. 
	In addition, it follows by Lemma~\ref{Lemma:TypeCounting} from the total number of types of length $n$ that:
	\begin{equation}\label{eq:BoundOnBetaMax}
	\begin{aligned}
	&\beta_n(R,\epsilon\,|K=1) \leq (n+1)^{|\cU||\cV||\cX||\cY|} \\ &\quad\quad\times\max\limits_{U^{(n)}V^{(n)}X^{(n)}Y^{(n)}\in\mathscr{S}_n} \exp \left[-n\left(k(U^{(n)}V^{(n)}X^{(n)}Y^{(n)}) -\mu_n\right)\right] \ .
	\end{aligned}
	\end{equation}
	By \eqref{eq:TypicalForBeta} and the continuity of the entropy function as well as the KL divergence, we can conclude that
	\begin{align}
	k(U^{(n)}V^{(n)}X^{(n)}Y^{(n)}) &= H(\tilde{X}\tilde{Y}) + \cD(\tilde{X}\tilde{Y}|| \bar{X}\bar{Y}) - H(\tilde{Y}|\tilde{U}\tilde{V}\tilde{X})\\ & -H(\tilde{X}|\tilde{U}\tilde{V}) - I(\tilde{U};\tilde{X}) - I(\tilde{V};\tilde{Y}|\tilde{U}) + \mu'_n\ ,\nonumber
	\end{align}
	with $\tU\tV\tX\tY \in \mathscr{L}(U,V)$
	and $\mu'_n \to 0$ when $n \to \infty$. We can further simplify the expression of  $k(U^{(n)}V^{(n)}X^{(n)}Y^{(n)})$ by observing that:
	\allowdisplaybreaks
	\begin{align*}\label{Eq:AnalysisOfK}
	&\numberthis k(U^{(n)} V^{(n)}X^{(n)}Y^{(n)}) =
			 H(\tilde{X}\tilde{Y}) + \cD(\tilde{X}\tilde{Y}|| \bar{X}\bar{Y}) - H(\tilde{Y}|\tilde{U}\tilde{V}\tilde{X}) \\ &\qquad\qquad\qquad\qquad\qquad -H(\tilde{X}|\tilde{U}\tilde{V}) - I(\tilde{U};\tilde{X})- I(\tilde{V};\tilde{Y}|\tilde{U}) + \mu'_n\\
	& = H(\tilde{X}\tilde{Y}) + \cD(\tilde{X}\tilde{Y}|| \bar{X}\bar{Y}) - H(\tilde{X}\tilde{Y}|\tilde{U}\tilde{V}) - I(\tilde{U};\tilde{X}) -I(\tilde{V};\tilde{Y}|\tilde{U})+ \mu'_n \\
	& = I(\tilde{X}\tilde{Y};\tilde{U}\tilde{V}) + \cD(\tilde{X}\tilde{Y}|| \bar{X}\bar{Y})- I(\tilde{U};\tilde{X}) -I(\tilde{V};\tilde{Y}|\tilde{U}) + \mu'_n\\
	&= I(\tilde{X}\tilde{Y};\tilde{U}) + I(\tilde{X}\tilde{Y};\tilde{V}|\tilde{U})+ \cD(\tilde{X}\tilde{Y}|| \bar{X}\bar{Y})- I(\tilde{U};\tilde{X})  -I(\tilde{V};\tilde{Y}|\tilde{U})+ \mu'_n\\
	&\overset{(\alph{InEquations}\stepcounter{InEquations})}{=} \cD(\tilde{U}\tilde{X}\tilde{Y}||\bar{U}\bar{X}\bar{Y}) + I(\tilde{X}\tilde{Y};\tilde{V}|\tilde{U}) - I(\tilde{Y};\tilde{V}|\tilde{U})+ \mu'_n\\
	&\overset{}{=}  \cD(\tilde{U}\tilde{X}\tilde{Y}||\bar{U}\bar{X}\bar{Y}) + I(\tilde{X};\tilde{V}|\tilde{U}\tilde{Y}) + \mu'_n\ ,
	\end{align*}
where equality $(\alph{OutEquations}\stepcounter{OutEquations})$ stems from the identity~\cite{Han-1987}: 
	\begin{align}
	I(\tX\tY;\tU) + \cD(\tX\tY||\bX\bY) - I(\tU;\tX) &= I(\tU;\tY|\tX) + \cD(\tX\tY||\bX\bY) \\
	&=\cD(\tU\tX\tY||\bU\bX\bY)\ ,\nonumber
	\end{align}
which holds the case on unidirectional communication. Note that the following Markov chain: $X \mkv (U,Y) \mkv V$ holds under both hypotheses (i.e., the same chain can be written with a bar over all variables), \emph{but not} for the auxiliary RVs, marked with a tilde. 

Finally, we conclude our development of  $k(U^{(n)}V^{(n)}X^{(n)}Y^{(n)})$ as follows:
	\begin{align}\label{Eq:Prop1EndOfProof}
	& k(U^{(n)} V^{(n)}X^{(n)}Y^{(n)}) =
	\cD(\tilde{U}\tilde{X}\tilde{Y}||\bar{U}\bar{X}\bar{Y}) + I(\tilde{X};\tilde{V}|\tilde{U}\tilde{Y})+ \mu'_n\\& = \sum_{\forall(u, v, x, y)} P_{\tilde{U}\tilde{V}\tilde{X}\tilde{Y}}(u,v,x,y)\times \nonumber\\ 
	&\qquad \times\log \left(\frac{P_{\tilde{U}\tilde{X}\tilde{Y}}(u,x,y)}{P_{\bar{U}\bar{X}\bar{Y}}(u,x,y)}\frac{P_{\tilde{X}\tilde{V}|\tilde{U}\tilde{Y}}(x,v|u,y)}{P_{\tilde{X}|\tilde{U}\tilde{Y}}(x|u,y)P_{\tilde{V}|\tilde{U}\tilde{Y}}(v|u,y)}\right)+ \mu'_n\nonumber\\
	&\overset{(\alph{InEquations}\stepcounter{InEquations})}{=} \sum_{\forall(u, v, x, y)} P_{\tilde{U}\tilde{V}\tilde{X}\tilde{Y}}(u,v,x,y)\log \left(\frac{P_{\tilde{U}\tilde{V}\tilde{X}\tilde{Y}}(u,v,x,y)}{P_{\bar{U}\bar{X}\bar{Y}}(u,x,y)P_{\bar{V}|\bar{U}\bar{Y}}(v|u,y)}\right)+ \mu'_n\nonumber\\
	&= \sum_{\forall(u, v, x, y)} P_{\tilde{U}\tilde{V}\tilde{X}\tilde{Y}}(u,v,x,y)\log \left(\frac{P_{\tilde{U}\tilde{V}\tilde{X}\tilde{Y}}(u,v,x,y)}{P_{\bar{U}\bar{V}\bar{X}\bar{Y}}(u,v,x,y)}\right)+ \mu'_n\nonumber\\
	& = \cD(\tilde{U}\tilde{V}\tilde{X}\tilde{Y}||\bar{U}\bar{V}\bar{X}\bar{Y}) + \mu'_n \ ,\nonumber
	\end{align}
	where the sums are over the $\textrm{supp}(P_{\tilde{U}\tilde{V}\tilde{X}\tilde{Y}})$; and $(\alph{OutEquations}\stepcounter{OutEquations})$ is due to  the definition of the set $\mathscr{L}(U,V)$ that implies $P_{\tV|\tU\tY}(v|u,y) = P_{V|UY}(v|u,y)$. In addition, as coding (at each side) is performed before a decision is made, it is clear it is done in the same way under both hypotheses. Thus, while $P_{UVY}(u,v,y) \neq P_{\bU\bV\bY}(u,v,y)$, it is true that $P_{\bV|\bU\bY}(v|u,y) = P_{V|UY}(v|u,y) = P_{\tV|\tU\tY}(v|u,y)$. As $\mu_n,\mu'_n$ are arbitrarily small, as a function of the choices of $\delta$ and $\delta'$ provided that $n$ is large enough, this concludes the proof of Proposition~\ref{Prop:Main}.
\end{proof}

\section{Collaborative Hypothesis Testing with Multiple Rounds} \label{Sec:ManyRounds}
We now allow the statisticians  to exchange data over an arbitrary but \emph{finite} number of exchange rounds, and investigate the extension of Proposition~\ref{Prop:Main} to this more general case. The corresponding result is stated below.

\begin{Prop}[Sufficient conditions for $K$-rounds of interaction] \label{Prop:ManyRounds}Let $\mathscr{S}(R)$ and $\mathscr{L}\big(U_{[1:K]},V_{[1:K]}\big)$ denote the sets of probability measures defined in terms of corresponding RVs: 
	\begin{align}
	&\mathscr{S}(R) \coloneqq  \Bigl\{U_{[1:K]}V_{[1:K]} : R \geq \sum\limits_{k=1}^K \big[I(X;U_{[k]}|U_{[1:k-1]}V_{[1:k-1]})\\ &\quad \quad \quad \quad \quad + I(Y;V_{[k]}|U_{[1:k-1]}V_{[1:k-2]}) \big]\ ,\nonumber\\ 
	&  U_{[k]} \mkv \big(X,U_{[1:k-1]},V_{[1:k-1]}\big) \mkv Y\ , \  |\mathcal{U}_{[k]} |<+\infty \ ,\nonumber \\
	&V_{[k]} \mkv \big(Y,U_{[1:k]},V_{[1:k-1]}\big) \mkv X\ , \  |\mathcal{V}_{[k]} |<+\infty \ , \forall\, k\in[1:K] \Bigr\} \ ,\nonumber\\
	&\mathscr{L}\big(U_{[1:K]},V_{[1:K]}\big)
	\coloneqq \Bigl\{ \tU_{[1:K]}\tV_{[1:K]}\tX\tY:\\
	  & \quad\, P_{\tU_{[1:K]}\tV_{[1:K]}\tX} = P_{U_{[1:K]}V_{[1:K]}X} \ , \,\, P_{\tU_{[1:K]}\tV_{[1:K]}\tY} = P_{U_{[1:K]}V_{[1:K]}Y}\Bigr\} \ ,\nonumber
	\end{align}
	where $U_{[1:k]} \coloneqq   (U_{[1]},\dots,U_{[k]})$ and $V_{[1:k]} \coloneqq   (V_{[1]},\dots,V_{[k]})$ represent the exchanged data between nodes $A$ and $B$  until round $k$. A feasible  error exponent to the error probability of Type II, when the  total (over $K$-rounds)  exchange rate is $R$ (bits per sample), is given by 
	\begin{align}
	&\lim_{\epsilon \to 0}\,\liminf_{n\to\infty} -\frac{1}{n}  \log \beta_n(R,\epsilon|K) \geq \\
	&\quad\quad\quad\quad\quad\max\limits_{\mathscr{S}(R)}\min\limits_{\mathscr{L}\big(U_{[1:K]},V_{[1:K]}\big)} \cD\Big(P_{\tU_{[1:K]}\tV_{[1:K]}\tX\tY}\big |\big|P_{\bU_{[1:K]}\bV_{[1:K]}\bX\bY}\Big) \ .\nonumber
	\end{align} 
\end{Prop}
This proposition is very clearly an extension of Proposition~\ref{Prop:Main} to allow multiple rounds of interaction. The implication of this result is as follows. Given a limited budget of rate $R$ for data exchange, which the nodes can divide as they choose into any finite number of $K$ exchange rounds, the gain of interaction attained through the different characteristics of the underlying Markov process between the  RVs comes at no cost in terms of the form of the expression for the error exponent.

\begin{proof}[Proof of Proposition~\ref{Prop:ManyRounds}]
The proof of this proposition is very similar to the one presented above for Proposition~\ref{Prop:Main}. Codebook construction, as well as encoding and decision mappings remain similar. At each round, a codebook is built based on any possible combination of the previous messages. Given previous messages, each node chooses a message in the relevant codebook and communicates its index to the other statistician. The process continues until a message cannot be found, which is jointly typical with all previous messages as well as the observed sequence, in which case $H_1$ is declared. Otherwise, until the end of round $K$ in which case $H_0$ is declared, provided that all the messages are jointly typical with the observed sequence. We next provide a sketch of the proof to this simple extension. 
	
	
The analysis of $\alpha_n$ applies similarly to the previous case, as long as a finite number of rounds is considered. Regarding the analysis of $\beta_n$, the following important changes are needed: 
	\begin{itemize}
		\item The set $\mathcal{S}_{\textbf{i}\textbf{j}}(\vct{x})$ is now defined by using all exchanged messages:
		\begin{equation}
		\mathcal{S}_{\textbf{i}\textbf{j}}(\vct{x}) \coloneqq \{\vct{u}_{[1],i_1}\}\times\{\vct{v}_{[1],i_1j_1}\}\times\cdots\times\{\vct{u}_{[K],i_K}\}\times\{\vct{v}_{[K],i_Kj_K}\} \times \mathcal{G}_{\textbf{i}\textbf{j}} \times \{\vct{x}\} \ ,
		\end{equation}
		where $(\textbf{i},\textbf{j})\coloneqq(i_1,j_1),\dots,(i_K,j_K)$ and $\vct{u}_{[k],i_k}$ is the $i_k$-th message in the codebook $\mathcal{C}_{\vct{u}_{[k]}}$,  similarly for the other random variables.
		\item Similarly, $\mathscr{S}_n$ is now defined by the union over the codewords of \emph{all} auxiliary RVs.
		\item The bound over $K_{\textbf{i}\textbf{j}}$ (analogues to expression \eqref{Eq:Kij} before) writes:
		\begin{equation}
		K_{\textbf{i}\textbf{j}}(\vct{x}) \leq \exp \left[nH\big(Y^{(n)}|U_{[1:K]}^{(n)}V_{[1:K]}^{(n)}X^{(n)}\big) \right] \ .
		\end{equation}
		\item Finally, $K\big(U_{[1:K]}^{(n)}V_{[1:K]}^{(n)}X^{(n)}Y^{(n)}\big)$, i.e., see \eqref{Eq:K},  is now calculated through the summation over the codebooks of all messages, considering the cardinality of the conditional set: $\big|\cT^n_{[X|\vct{u}_{[1:K],\textbf{i}}\vct{v}_{[1:K],\textbf{i}\textbf{j}}]_{\delta}}\big|$.
		\item As more steps are performed, each of which requires encoding, we also need to define new $\delta$'s for each of these steps. We refrain from this for the sake of readability, as all of these $\delta$'s go to $0$ together, as was seen in the case of a single round.
	\end{itemize}
	Considering these differences, after $k$ rounds of interactions, $k\big(U_{[1:k]}^{(n)}V_{[1:k]}^{(n)}\big)$  can be shown to be equal to (e.g. see \eqref{Eq:k}):
	\begin{align}
&k\big(U_{[1:k]}^{(n)}V_{[1:k]}^{(n)}\big)   = \cD\big(P_{\tU_{[1:k-1]}\tV_{[1:k-1]} \tX\tY}||P_{\bU_{[1:k-1]}\bV_{[1:k-1]} \bX\bY}\big)\\ 
	&  \quad+ I(\tY;\tU_{[k]}|\tU_{[1:k-1]}\tV_{[1:k-1]} \tX) + I(\tX;\tV_{[k]}|\tU_{[1:k]}\tV_{[1:k-1]}\tY) + \mu'_n\ .\nonumber
	\end{align}
	By continuing in the same manner as in \eqref{Eq:Prop1EndOfProof}, we show:
\allowdisplaybreaks
\begin{align*}
&k \big(U_{[1:k]}^{(n)} V_{[1:k]}^{(n)}\big) - \mu'_n = \sum_{\forall} P_{\tU_{[1:k-1]}\tV_{[1:k-1]}\tX\tY} \log \frac{P_{\tU_{[1:k-1]}\tV_{[1:k-1]}\tX\tY}}{P_{\bU_{[1:k-1]}\bV_{[1:k-1]}\bX\bY}}\\ 
&+ \sum_{\forall} P_{\tU_{[1:k]}\tV_{[1:k-1]}\tX\tY} \log \frac{P_{\tU_{[k]}\tY|\tU_{[1:k-1]}\tV_{[1:k-1]}\tX}}{P_{\tU_{[k]}|\tU_{[1:k-1]}\tV_{[1:k-1]}\tX}P_{\tY|\tU_{[1:k-1]}\tV_{[1:k-1]}\tX}}\\
&+\sum_{\forall} P_{\tU_{[1:k]}\tV_{[1:k]}\tX\tY} \log \frac{P_{\tV_{[k]}\tX|\tU_{[1:k]}\tV_{[1:k-1]}\tY}}{P_{\tV_{[k]}|\tU_{[1:k]}\tV_{[1:k-1]}\tY}P_{\tX|\tU_{[1:k]}\tV_{[1:k-1]}\tY}}\\
&= \sum_{\forall} P_{\tU_{[1:k]}\tV_{[1:k]} \tX\tY}  \log \left[ \frac{P_{\tU_{[1:k]}\tV_{[1:k]}\tX\tY}}{P_{\bU_{[1:k-1]}\bV_{[1:k-1]}\bX\bY}P_{\tU_{[k]}|\tU_{[1:k-1]}\tV_{[1:k-1]}\tX}P_{\tV_{[k]}|\tU_{[1:k]}\tV_{[1:k-1]}\tY}} \right]\\
&\overset{(\alph{InEquations}\stepcounter{InEquations})}{=} \sum_{\forall} P_{\tU_{[1:k]}\tV_{[1:k]}\tX\tY} \log \left[ \frac{P_{\tU_{[1:k]}\tV_{[1:k]}\tX\tY}}{P_{\bU_{[1:k-1]}\bV_{[1:k-1]}\bX\bY}P_{\bU_{[k]}|\bU_{[1:k-1]}\bV_{[1:k-1]}\bX}P_{\bV_{[k]}|\bU_{[1:k]}\bV_{[1:k-1]}\bY}} \right]\\	
&= \sum_{\forall} P_{\tU_{[1:k]}\tV_{[1:k]}\tX\tY} \log \left[ \frac{P_{\tU_{[1:k]}\tV_{[1:k]}\tX\tY}}{P_{\bU_{[1:k]}\bV_{[1:k]}\bX\bY}} \right]\\
&= \cD\big(P_{\tU_{[1:k]}\tV_{[1:k]}\tX\tY}||P_{\bU_{[1:k]}\bV_{[1:k]}\bX\bY}\big) \ , 
\end{align*}
where all sums are over all the alphabets of the relevant RVs. Here, $(\alph{OutEquations}\stepcounter{OutEquations})$, much like in the case of single-round exchange above, is due to the definition of the set $\mathscr{L}(U_{[1:k]},V_{[1:k]})$ and to the fact that encoding occurs without knowledge of the PM controlling the RVs, and thus behaves the same under each of the hypotheses. Thus, 
$$
P_{\tU_{[k]}|\tU_{[1:k-1]}\tV_{[1:k-1]}\tX} = P_{U_{[k]}|U_{[1:k-1]}V_{[1:k-1]}X} = P_{\bU_{[k]}|\bU_{[1:k-1]}\bV_{[1:k-1]}\bX},
$$ 
and similarly for the messages $V_{[k]}$ at node $B$. Pursuing this until round $K$, the proposition is proved.
\end{proof}

\begin{remark}
	For reasons of brevity and clarity, we chose in this paper to concentrate on scenarios where the interactions begins and ends at node $A$. However, it is easy to see that this does not necessarily need to be the case. The process could start or end at node $B$, implying that the final round of exchange  is in fact only half of a round, without any significant changes to the theory or our proofs.
\end{remark}

\section{Collaborative Testing Against Independence} \label{Sec:AgainstIndependence}
We now concentrate on the special problem of testing against independence, where it is assumed that under $H_1$ the $n$ observed samples of the RVs $(X,Y)$ defined on $(\mathcal{X}\times\mathcal{Y}, \mathcal{B}_{\mathcal{X}\times\mathcal{Y}})$ 
are distributed according to a product measure:
\begin{equation}\label{eq-testing-independence}
\begin{cases}
H_0: & P_{XY}(x,y)\ , \forall\,(x,y)\in\mathcal{X}\times\mathcal{Y} \ ,\\
H_1: & P_{\bX\bY}(x,y) = P_X(x)P_Y(y)\ , \forall\,(x,y)\in\mathcal{X}\times\mathcal{Y}  \ ,
\end{cases}
\end{equation}
where $P_X(x)$ and $P_Y(y)$ are the marginal probability measures implied by $P_{XY}(x,y)$. Testing against independence was first studied, for a unidirectional communication link~\cite{Han-1987} (see also \cite{Ahlswede-Csiszar-1986}). It was shown that the \emph{optimal} rate of exponential decay to the error probability of Type II is:
\begin{align}
\liminf_{n\to\infty}-\frac{1}{n} \log \beta_n(R,\epsilon|K=1/2) & =\\
\max_{\begin{array}{l}
P_{U|X}:\mathcal{X}\mapsto \mathcal{P}(\mathcal{U}) \\  \text{ s.t. $I(U;X)  \leq R$}
\end{array}
} &I(U;Y) \ ,\ \, \forall\,0<\epsilon<1\ ,\nonumber
\end{align}
where  $R$ is the available exchange rate from node $A$ to node $B$. Note that much like the case of centralized HT, the \emph{optimal} error exponent does not depend on $\epsilon$ and thus a \emph{strong unfeasibility} (converse) result holds.

Testing against independence in a cooperative scenario was first studied in~\cite{Xiang-Kim-2012}, for the case of a single round of interaction. It was shown that a feasible error exponent to the error probability of Type II  is given by
\begin{equation}\label{Eq:TestingAgainstIndependenceB}
\lim_{\epsilon \to 0}\,\liminf_{n\to\infty}-\frac{1}{n} \log \beta_n(R,\epsilon|K=1)  \geq E(R)
\end{equation}
subject  to a total available exchange rate $R$, where:
\begin{equation}\label{Eq:TestingAgainstIndependence}
E(R)\coloneqq \!\!\! \!\!\! \!\!\! \!\!\!\!\!\! \!\!\! \max_{\begin{array}{c}
P_{U|X}:\mathcal{X}\mapsto  \mathcal{P}(\mathcal{U})\\ P_{V|UY}:\mathcal{U}\times\mathcal{Y}\mapsto \mathcal{P}( \mathcal{V})\\
 \text{ s.t. $I(U;X)  + I(V;Y|U)\leq R$}
\end{array}
}\!\!\! \!\!\!  \!\!\! \!\!\! \!\!\! \!\!\!\big[I(U;Y) + I(V;X|U)\big]\ .
\end{equation}
While the proof of feasibility inspired the approach taken in Proposition~\ref{Prop:Main} for general hypotheses, unfortunately, the auxiliary RVs identified in the \emph{weak} unfeasibility  proof in~\cite{Xiang-Kim-2012} do not match the required Markov chains to lead to a feasible exponent (the reader may refer to~\cite{Vega-Piantanida-Hero-2015, Kaspi-1985} for further details). 

In this section, we revisit the problem of characterizing the reverse inequality in~\eqref{Eq:TestingAgainstIndependenceB}. We 
prove a \emph{weak unfeasibility} result, determining necessary and sufficient conditions to the optimality of the error exponent~\eqref{Eq:TestingAgainstIndependence} satisfying $\alpha_n \leq \epsilon$ for \emph{any} $0<\epsilon <1$ (i.e., we prove that the exponent in ~\eqref{Eq:TestingAgainstIndependence} is optimal in the case where we constrain $\alpha_n$ to go to $0$ with $n$). We first show that Proposition~\ref{Prop:Main} implies the feasibility part, i.e., inequality~\eqref{Eq:TestingAgainstIndependenceB}, and then follow with a new proof for the  unfeasibility (for $\epsilon$ arbitrarily small) of any higher exponent.

\begin{theorem}[Necessary and sufficient conditions for testing against independence with $K=1$]\label{Theo:AgainstIndependenceOptimality}
The \emph{optimal}  error exponent  to the error probability of Type II for testing against independence is given by 
\begin{equation}
\lim_{\epsilon \to 0}\,\liminf_{n\to\infty}-\frac{1}{n} \log \beta_n(R,\epsilon|K=1)  \coloneqq  E(R)\ , \ \textrm{ $\forall$ $0<\epsilon <1$} \ ,
\end{equation}
where $E(R)$ is defined in~\eqref{Eq:TestingAgainstIndependence}, and $R$ denotes the available rate of interaction between the statisticians and $\epsilon$ is the error probability of Type I. 
\end{theorem}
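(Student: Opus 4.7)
The plan is to prove both the $\geq$ and $\leq$ inequalities separately. For the $\geq$ direction, I would apply Proposition~\ref{Prop:Main} to the setting $P_{\bX\bY}=P_X P_Y$ and explicitly evaluate the inner minimum over $\mathscr{L}(U,V)$. Because the encoders do not depend on the hypothesis, under $H_1$ one has $P_{\bU\bV\bX\bY}(u,v,x,y)=P_{U|X}(u|x)P_{V|UY}(v|u,y)P_X(x)P_Y(y)$. Writing
\begin{equation*}
\cD(P_{\tU\tV\tX\tY}\|P_{\bU\bV\bX\bY}) = -H(\tU\tV\tX\tY) - E_{P_{\tilde{\cdot}}}\bigl[\log P_{\bU\bV\bX\bY}\bigr],
\end{equation*}
the marginal constraints $P_{\tU\tV\tX}=P_{UVX}$ and $P_{\tU\tV\tY}=P_{UVY}$ defining $\mathscr{L}(U,V)$ force the cross-entropy term to equal the constant $H(X)+H(Y)+H(U|X)+H(V|UY)$, so the minimization collapses to maximizing $H(\tU\tV\tX\tY)$ subject to those marginals. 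The maximizer is the conditionally independent coupling $P^{\star}_{\tX\tY|\tU\tV} = P_{X|UV}P_{Y|UV}$, and substituting back together with the chain-rule identities $I(X;UV)-I(U;X)=I(V;X|U)$ and $I(Y;UV)-I(V;Y|U)=I(U;Y)$ yields $\min_{\mathscr{L}(U,V)} \cD = I(U;Y)+I(V;X|U)$, which is exactly $E(R)$ of~\eqref{Eq:TestingAgainstIndependence}.

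For the $\leq$ direction, I would combine a Stein-type inequality with a single-letterization. Since the test depends only on $(X^n,M_B)$, a Neyman--Pearson argument gives, for any $\alpha_n\leq\epsilon$,
\begin{equation*}
-\tfrac{1}{n}\log\beta_n \leq \tfrac{1}{n(1-\epsilon)}\bigl[\cD(P^{H_0}_{X^nM_AM_B}\|P^{H_1}_{X^nM_AM_B})+h(\epsilon)\bigr].
\end{equation*}
Because $M_A$ is deterministic in $X^n$ and $Y^n\perp X^n$ under $H_1$, the chain rule gives $\cD(P^{H_0}_{X^nM_AM_B}\|P^{H_1}_{X^nM_AM_B}) = I_{H_0}(X^n;M_B|M_A) + \cD(P^{H_0}_{M_B|M_A}\|P^{H_1}_{M_B|M_A}|P_{M_A})$, and the second term is bounded above by $I_{H_0}(M_A;Y^n)$ via data-processing applied to $M_B=g_{[1]}(Y^n,M_A)$. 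The rate constraint analogously yields $nR \geq I(M_A;X^n)+I(M_B;Y^n|M_A)$. A single-letterization with auxiliary variables adapted to the interaction structure, combined with a Csisz\'ar-sum-identity telescoping and the usual time-sharing index $T\sim\mathrm{Uniform}[1:n]$, then reduces the two bounds to $R\geq I(U;X)+I(V;Y|U)$ and $-\tfrac{1}{n}\log\beta_n\leq I(U;Y)+I(V;X|U)+o(1)$, with $(U,V)$ satisfying the required Markov chains $U\mkv X\mkv Y$ and $V\mkv(U,Y)\mkv X$. Taking $\epsilon\to 0$ and the supremum over admissible $(U,V)$ yields $E(R)$.

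The main obstacle is precisely the single-letterization step, which is where the authors explicitly flag the gap in the Xiang--Kim converse. Conditioning on the block message $M_A=f_{[1]}(X^n)$ introduces cross-dependencies among the $X_i$'s, so the naive choice $V_i=(M_B,X_{i+1}^n)$ does \emph{not} satisfy $V_i\mkv(U_i,Y_i)\mkv X_i$ and the auxiliary construction must be chosen with care to preserve the interactive Markov structure. The independence $\bX\perp\bY$ under $H_1$ is what makes the data-processing step $\cD(P^{H_0}_{M_B|M_A}\|P^{H_1}_{M_B|M_A}|P_{M_A})\leq I_{H_0}(M_A;Y^n)$ available; this asymmetry is specific to testing against independence and is the fundamental reason Proposition~\ref{Prop:Main} can only deliver a feasibility bound in the general hypothesis setting.
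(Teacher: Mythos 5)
Your feasibility half is correct and in fact takes a somewhat cleaner route than the paper: you evaluate $\min_{\mathscr{L}(U,V)}\cD(P_{\tU\tV\tX\tY}\|P_{\bU\bV\bX\bY})$ exactly by observing that, for testing against independence, the cross-entropy term $-E[\log P_{\bU\bV\bX\bY}]$ is pinned down by the two marginal constraints (it equals $H(X)+H(U|X)+H(Y)+H(V|UY)$), so the minimization collapses to a maximum-entropy coupling problem solved by $P_{\tX\tY|\tU\tV}=P_{X|UV}P_{Y|UV}$. The paper instead lower-bounds the equivalent form $\cD(P_{\tU\tX\tY}\|P_{\bU\bX\bY})+I(\tX;\tV|\tU\tY)$ term by term via chain rules and non-negativity of conditional divergences, arriving at the same $I(U;Y)+I(V;X|U)$. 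Both are valid; yours additionally shows the inner minimum is attained rather than merely bounded.

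The converse half, however, has a genuine gap: you assert that ``a single-letterization with auxiliary variables adapted to the interaction structure \dots reduces the two bounds to'' the single-letter region, while in the same breath acknowledging that this step is ``the main obstacle'' and that the naive auxiliary choice violates the required Markov chain. That step is precisely the content of the proof and cannot be left as a black box. The paper's resolution has three ingredients you do not supply. First, the explicit choice $\hat{U}_i\coloneqq(I_A,X^{i-1},Y_{i+1}^n)$ and $V_i\coloneqq I_B$, whose Markov chains $\hat{U}_i\mkv X_i\mkv Y_i$ and $V_i\mkv(\hat{U}_i,Y_i)\mkv X_i$ must be verified --- the paper does this in Appendix~B via Kaspi's interaction lemma, and this is exactly where the Xiang--Kim converse breaks down. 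Second, the Csisz\'ar-sum manipulation does not deliver the clean single-letter bounds you state: both the exponent bound and the rate bound acquire the \emph{same} nonnegative residual $T=\frac1n\sum_{i} I(Y_{i+1}^n;X_i|I_AI_BX^{i-1})$ subtracted from them (note the paper's time-sharing index is $Q$, not $T$). Third, one must then show that the region $\{\Delta\le I(U;Y)+I(V;X|U)-T,\; R\ge I(U;X)+I(V;Y|U)-T\}$ is contained in the target region, which the paper accomplishes by Fourier--Motzkin elimination over $T\ge 0$ together with a check of the extremal points. Without these steps the weak unfeasibility direction is not established.
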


\begin{remark}
In  a similar manner to Theorem~\ref{Theo:AgainstIndependenceOptimality}, a \emph{feasible} error exponent to the error probability of Type II with $K$ rounds is given by
\begin{align}\label{eq:ManyRounds}
&\lim_{\epsilon \to 0}\,\liminf_{n\to\infty}-\frac{1}{n} \log \beta_n(R,\epsilon|K) \geq \\ 
 \max\limits_{U_{[1:K]}V_{[1:K]} \in \mathscr{S}(R)}
& \sum\limits_{k=1}^K \left[I\big(U_{[k]};Y|U_{[1:k-1]}V_{[1:k-1]}\big)\right. + \left. I\big(V_{[k]};X|U_{[1:k]}V_{[1:k-1]}\big)\right]\ . \nonumber 
\end{align}
The proof of the feasibility of \eqref{eq:ManyRounds} follows largely the same path as the one for the feasibility part provided below for Theorem~\ref{Theo:AgainstIndependenceOptimality}. However, for $K>1$ our unfeasibility proof does not hold and this feasible exponent result may not longer be optimal.
\end{remark}

\subsection{Proof of Theorem~\ref{Theo:AgainstIndependenceOptimality}}
We first enunciate and prove some preliminary results from which the proof of Theorem~\ref{Theo:AgainstIndependenceOptimality} will easily follow. 

\begin{lemma}[Multi-letter representation for testing against independence with $K=1$ \cite{Xiang-Kim-2012}]\label{Prop:WeakConverse}
The error exponent to the error probability of Type II  for testing against independence with one round satisfies:  
\begin{align}\label{eq:ExpressionToExpend}
\lim_{\epsilon \to 0}\,\liminf_{n\to\infty} -\frac{1}{n} & \log \beta_n(R,\epsilon\,|K=1) \leq \frac{1}{n} \big[I(I_A;Y^n) + I(I_B;X^n|I_A) \big]\ ,\\
R& \geq \frac{1}{n} \big[I(I_A;X^n) + I(I_B;Y^n|I_A) \big]\ ,
\end{align}
where $I_A\coloneqq f_1(X^n)$ and $I_B\coloneqq g_1\big(f_1(X^n),Y^n\big)$ for any mappings $(f_1,g_1)$, as given in Defintion~\ref{def-main}.
\end{lemma}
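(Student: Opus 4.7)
The plan is to derive the two inequalities separately. The rate constraint will follow from standard cardinality/entropy bounds on the two encoders, while the exponent inequality will come from applying the data-processing inequality for $\cD$ to the triple $(X^n,I_A,I_B)$, and then exploiting the product form of the $H_1$-distribution $P_X^n P_Y^n$ to factorize the resulting divergence into precisely the two mutual informations appearing in the statement.

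For the rate inequality, start from $nR \geq \log|f_{[1]}| + \log|g_{[1]}|$. Because $I_A=f_1(X^n)$ is deterministic in $X^n$, $\log|f_{[1]}| \geq H(I_A) = I(I_A;X^n)$. Because $I_B=g_1(I_A,Y^n)$ is deterministic in $(I_A,Y^n)$, $\log|g_{[1]}| \geq H(I_B|I_A) = I(I_B;Y^n|I_A)$. Summing gives the desired bound.

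For the exponent inequality, observe that the binary decision $\phi(X^n,I_B)$ is a deterministic function of $(X^n,I_A,I_B)$; hence monotonicity of $\cD$ under deterministic maps gives
\begin{equation*}
\cD\bigl(P^{H_0}_{X^nI_AI_B}\,\big\|\,P^{H_1}_{X^nI_AI_B}\bigr) \;\geq\; (1-\alpha_n)\log\tfrac{1-\alpha_n}{\beta_n} + \alpha_n\log\tfrac{\alpha_n}{1-\beta_n},
\end{equation*}
which rearranges into $-(1-\alpha_n)\log\beta_n \leq \cD\bigl(P^{H_0}_{X^nI_AI_B}\,\|\,P^{H_1}_{X^nI_AI_B}\bigr) + h(\alpha_n)$, with $h$ the binary entropy. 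I then split this joint divergence by the chain rule:
\begin{equation*}
\cD\bigl(P^{H_0}_{X^nI_AI_B}\|P^{H_1}_{X^nI_AI_B}\bigr) = \cD\bigl(P^{H_0}_{I_AI_B}\|P^{H_1}_{I_AI_B}\bigr) + \cD\bigl(P^{H_0}_{X^n|I_AI_B}\|P^{H_1}_{X^n|I_AI_B}\,\big|\,P^{H_0}_{I_AI_B}\bigr).
\end{equation*}
The crucial observation for the second summand is that under $H_1$, $X^n\perp Y^n$; since $I_A=f_1(X^n)$ and $I_B = g_1(I_A,Y^n)$, this forces $X^n\perp I_B\mid I_A$ under $H_1$, and combined with $P^{H_0}_{X^n}=P^{H_1}_{X^n}$ it yields $P^{H_1}_{X^n|I_AI_B} = P^{H_0}_{X^n|I_A}$. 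The second summand therefore collapses to $I(X^n;I_B|I_A)$ computed under $H_0$. For the first summand, a second use of data processing across the deterministic map $(I_A,Y^n)\mapsto(I_A,I_B)$, together with the $H_1$-independence $I_A\perp Y^n$, gives
\begin{equation*}
\cD\bigl(P^{H_0}_{I_AI_B}\|P^{H_1}_{I_AI_B}\bigr) \leq \cD\bigl(P^{H_0}_{I_AY^n}\|P^{H_1}_{I_AY^n}\bigr) = \cD\bigl(P^{H_0}_{I_AY^n}\|P^{H_0}_{I_A}P^{H_0}_{Y^n}\bigr) = I(I_A;Y^n).
\end{equation*}
Adding the two pieces, dividing by $n$, and then taking $\liminf_{n\to\infty}$ followed by $\epsilon\to 0$ (so that $\alpha_n\to 0$ and $h(\alpha_n)/n\to 0$) produces the stated multi-letter exponent bound.

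The step I expect to take the most care is verifying the identity $P^{H_1}_{X^n|I_AI_B} = P^{H_0}_{X^n|I_A}$: it is precisely this collapse, enabled by the product form of $H_1$ and by $I_A$ being a function of $X^n$ alone, that makes the chain-rule split return exactly the two mutual informations $I(I_A;Y^n)$ and $I(I_B;X^n|I_A)$ rather than some less tractable residual. Beyond that observation, the remainder of the argument is routine chain-rule and data-processing bookkeeping.
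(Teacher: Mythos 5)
Your proposal is correct and follows essentially the same route as the paper's Appendix~C proof: both lower-bound $\cD\bigl(P^{H_0}_{X^nI_AI_B}\|P^{H_1}_{X^nI_AI_B}\bigr)$ by $(1-\alpha_n)\log\frac{1-\alpha_n}{\beta_n}+\alpha_n\log\frac{\alpha_n}{1-\beta_n}$ (your data-processing step is exactly the paper's log-sum inequality over the acceptance region), and both split that divergence via the chain rule into $I(I_B;X^n|I_A)$ plus a term equal to $\cD\bigl(P_{I_AI_B}\|Q_{I_AI_B}\bigr)$, which is then bounded by $I(I_A;Y^n)$ using the product form of $H_1$. The only differences are cosmetic (the order of the chain-rule conditioning) plus the fact that you also spell out the rate inequality, which the paper treats as standard.
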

\begin{proof}
The proof follows~\cite{Ahlswede-Csiszar-1986,Xiang-Kim-2012} and is given in Appendix~\ref{Apen:WeakConverse}.
\end{proof}


\begin{proof}[Proof of Theorem~\ref{Theo:AgainstIndependenceOptimality}]
We start showing the feasibility, followed by a proof of the unfeasibility part.

\subsubsection*{Feasibility}
In order to show the feasibility  to the exponent~\eqref{Eq:TestingAgainstIndependence} through the general result stated in  Proposition~\ref{Prop:Main}, it is convenient to use the form of the last expression in~\eqref{Eq:AnalysisOfK}:
	\begin{align} \label{Eq:StartAchievableAgainstIndependence}
\liminf_{n\to\infty}&-\frac{1}{n} \log \beta_n(R,\epsilon|K=1)  \geq \\
&\max\limits_{UV\in \mathscr{S}(R)} \min\limits_{\tU\tV\tX\tY \in \mathscr{L}(U,V)} \left[\cD(P_{\tilde{U}\tilde{X}\tilde{Y}}||P_{\bar{U}\bar{X}\bar{Y}}) + I(\tilde{X};\tilde{V}|\tilde{U}\tilde{Y})\right] \ .\nonumber
	\end{align}
We analyze each of these components separately:
\begin{equation}
\begin{aligned}
\cD(P_{\tilde{U}\tilde{X}\tilde{Y}}||P_{\bar{U}\bar{X}\bar{Y}}) &\overset{(\alph{InEquations}\stepcounter{InEquations})}{=} \cD(P_{\tilde{U}\tilde{Y}}||P_{\bar{U}\bar{Y}}) + \cD(P_{\tX|\tU\tY}||P_{\bX|\bU\bY}|P_{\tU\tY})\\
& \overset{(\alph{InEquations}\stepcounter{InEquations})}{=} I(U;Y) + \cD(P_{\tX|\tU\tY}||P_{\bX|\bU}|P_{\tU\tY})\\
&  \, \,{=}  \, \,I(U;Y) + \cD(P_{\tX|\tU\tY}||P_{\tX|\tU}|P_{\tU\tY}) + \cD(P_{\tX|\tU}||P_{\bX|\bU}|P_{\tU})\\
&\overset{(\alph{InEquations}\stepcounter{InEquations})}{\geq}I(U;Y) + \cD(P_{\tX|\tU\tY}||P_{\tX|\tU}|P_{\tU\tY}) \ ,
\end{aligned}
\end{equation}
where  $(\alph{OutEquations}\stepcounter{OutEquations})$ is due to the chain rule and $\cD(P_{\tX|\tU\tY}||P_{\bX|\bU\bY}|P_{\tU\tY})$ 
is the conditional KL-divergence; $(\alph{OutEquations}\stepcounter{OutEquations})$ stems from the assumption of testing against independence, as well as the Markov chain $\bU \mkv \bX \mkv \bY$ and the fact that $P_{\tU\tY} = P_{UY}$; and $(\alph{OutEquations}\stepcounter{OutEquations})$ is due to the fact that the KL-divergence is non-negative. To conclude the analysis, we note that:
	\begin{align}\label{eq:PlusForAgainstIndependence}
	&\cD(P_{\tilde{U}\tilde{X}\tilde{Y}}||P_{\bar{U}\bar{X}\bar{Y}}) \geq \\
	&\quad I(U;Y) + \sum\limits_{(u,x,y) \in \cU\times\cX \times \cY}  P_{\tilde{U}\tilde{X}\tilde{Y}}(u,x,y)\log\left(\frac{P_{\tilde{X}|\tilde{U}\tilde{Y}}(x|u,y)}{P_{\tilde{X}|\tilde{U}}(x|u)}\right)\nonumber \\
	&=I(U;Y) + \sum\limits_{(u,x,y) \in \cU\times\cX \times \cY}  P_{\tilde{U}\tilde{X}\tilde{Y}}(u,x,y)\log\left(\frac{P_{\tilde{X}\tilde{Y}|\tilde{U}}(x,y|u)}{P_{\tilde{X}|\tilde{U}}(x|u)P_{\tilde{Y}|\tilde{U}}(y|u)}\right)\nonumber\\
	&= I(U;Y) + I(\tilde{X};\tilde{Y}|\tilde{U}) \ .\nonumber
	\end{align}
	As for the second term in \eqref{Eq:StartAchievableAgainstIndependence}, we express it as follows:
	\begin{equation}\label{eq:MinusForAgianstIndependence}
	I(\tilde{V};\tilde{X}|\tilde{U}\tilde{Y}) = I(\tilde{V}\tilde{Y};\tilde{X}|\tilde{U}) - I(\tilde{X};\tilde{Y}|\tilde{U}) \geq I(\tilde{V};\tilde{X}|\tilde{U}) - I(\tilde{X};\tilde{Y}|\tilde{U}) \ .
	\end{equation}
	This allows us to conclude through \eqref{Eq:StartAchievableAgainstIndependence} that
	\begin{equation}
	\begin{aligned}
	\liminf_{n\to\infty}-\frac{1}{n} \log \beta_n(R,\epsilon\,|K=1) & \geq  \max\limits_{UV\in \mathscr{S}(R)} \min\limits_{\tU\tV\tX\tY \in \mathscr{L}(U,V)} \left[I(U;Y) + I(\tilde{V};\tilde{X}|\tilde{U})\right]\\
	& = \max\limits_{UV\in \mathscr{S}(R)} \left[I(U;Y) + I(V;X|U) \right] \ ,
	\end{aligned}
	\end{equation}
	which completes the proof of feasibility  through Proposition~\ref{Prop:Main}.

	\subsubsection*{Weak unfeasibility}
We are now ready to complete the proof of weak unfeasibility (converse) to Theorem~\ref{Theo:AgainstIndependenceOptimality}. From Lemma~\ref{Prop:WeakConverse}, it follows that:
\begin{align}	\label{eq:ExpressionToExpendC}
\lim_{\epsilon \to 0}\,\liminf_{n\to\infty} & -\frac{1}{n}  \log \beta_n(R,\epsilon\,|K=1) \\
& \leq \limsup_{n\to\infty}\, \frac1n [I(I_A;Y^n) + I(I_B;X^n|I_A)]\coloneqq \limsup_{n\to\infty} \Delta_n \ ,\nonumber
\end{align}	
where $I_A$ is the message sent from node A  while $I_B$ is its reply from node B.  In order to derive a single-letter expression, we expand \eqref{eq:ExpressionToExpendC} as follows:
\begin{equation}\label{eq:BoundE}
	\begin{aligned}
&\Delta_n \overset{(\alph{InEquations}\stepcounter{InEquations})}{=}  \frac1n\sum_{i=1}^n \left[I(I_A;Y_i|Y_{i+1}^n) + I(I_B;X_i|I_AX^{i-1})\right]\\
	&\overset{(\alph{InEquations}\stepcounter{InEquations})}{=} \frac1n \sum_{i=1}^n \left[I(I_AY_{i+1}^n;Y_i) + I(I_BY_{i+1}^n;X_i|I_AX^{i-1}) - I(Y_{i+1}^n;X_i|I_AI_BX^{i-1})\right]\\
	&= \frac1n\sum_{i=1}^n \left[I(I_AX^{i-1}Y_{i+1}^n;Y_i) - I(X^{i-1};Y_i|I_AY_{i+1}^n) + I(Y_{i+1}^n;X_i|I_AX^{i-1}) \right.\\ &\qquad\quad \left.+ I(I_B;X_i|I_AX^{i-1}Y_{i+1}^n) -  I(Y_{i+1}^n;X_i|I_AI_BX^{i-1})\right]\\
	&\overset{(\alph{InEquations}\stepcounter{InEquations})}{=}\frac1n \sum_{i=1}^n \left[I(\hat{U}_i;Y_i) + I(V_i;X_i|\hat{U}_i) - I(Y_{i+1}^n;X_i|I_AI_BX^{i-1})\right] \ ,
	\end{aligned}
	\end{equation}
	where $X^{i}$ denotes the first $i$ samples and $X_{i}^n=(X_i,\dots,X_n)$; $(\alph{OutEquations}\stepcounter{OutEquations})$ stems from the chain rule and $(\alph{OutEquations}\stepcounter{OutEquations})$ from the assumed i.i.d. nature of the sources. In $(\alph{OutEquations}\stepcounter{OutEquations})$, the following identity is used~\cite{Csiszar-Korner-2011}:
	\begin{equation} \label{eq:Csiszar-Korner}
	\sum\limits_{i=1}^n I(\vct{A}^{i-1};B_i|C,\vct{B}_{i+1}^n) = \sum\limits_{i=1}^n I(\vct{B}_{i+1}^n;A_i|C,\vct{A}^{i+1}) \ ,
	\end{equation} 
	where $C$ can be arbitrarily dependent to the vectors $\vct{A}$ and $\vct{B}$, as long as it does not change with $i$, 
	and the following auxiliary RVs are defined on measurable spaces $(\mathcal{U}_i\times\mathcal{V}_i, \mathcal{B}_{\mathcal{U}_i\times\mathcal{V}_i})$ by setting:
	\begin{equation}\label{eq:ChosenRandomVariables}
	\hat{U}_i \coloneqq  (I_A,X^{i-1},Y_{i+1}^n) \ \ \textrm{ and } \ \ V_i \coloneqq I_B \ , \ \forall \, i=[1:n]\ .
	\end{equation}
	It is important to emphasize that the required Markov chains in~\eqref{Eq:TestingAgainstIndependence} are verified for each $i=[1:n]$ (see Appendix~\ref{Apen:MarkovChains}). Let $Q$ be a RV uniformly distributed over $[1:n]$, then:
	\begin{equation}\label{eq:TimeSharing}
	\begin{aligned}
	\Delta_n &\leq I(\hat{U}_Q;Y_Q|Q) + I(V_Q;X_Q|\hat{U}_Q,Q)-  \frac1n \sum_{i=1}^n I(Y_{i+1}^n;X_i|I_AI_BX^{i-1})  \\
	&=  I(U;Y) + I(V;X|U) - T \ ,
	\end{aligned}
	\end{equation}
	where $U \coloneqq (\hat{U}_Q,Q)$. We now bound the required rate, from the size of the mappings, we have
	\begin{equation}
	nR \geq  I(I_A;X^n) + I(I_B;Y^nI_A) \geq I(I_A;X^n) + I(I_B;Y^n|I_A) \ .
	\end{equation}
	For convenience,  we analyze each of these terms separately: 
	\begin{equation}
	\begin{aligned}
	I(I_A;X^n) & \overset{(\alph{InEquations}\stepcounter{InEquations})}{=} \sum_{i=1}^n I(I_AX^{i-1};X_i) \\
	& = \sum_{i=1}^n \left[I(I_AX^{i-1}Y_{i+1^n};X_i) - I(Y_{i+1}^n;X_i|I_AX^{i-1})\right] \ ,
	\end{aligned}
	\end{equation}
	where $(\alph{OutEquations}\stepcounter{OutEquations})$ is due to the i.i.d nature of samples. The second term writes  as:
	\begin{equation}
	\begin{aligned}
	& I(I_B;Y^n|I_A) = \sum_{i=1}^n \left[I(I_BX^{i-1};Y_i|I_AY_{i+1}^n) - I(X^{i-1};Y_i|I_AI_BY_{i+1}^n)\right]\\
	&= \sum_{i=1}^n \left[I(X^{i-1};Y_i|I_AY_{i+1}^n) +I(I_B;Y_i|I_AX^{i-1}Y_{i+1}^n) - I(X^{i-1};Y_i|I_AI_BY_{i+1}^n)\right]\\
	&= \sum_{i=1}^n \left[I(I_B;Y_i|I_AX^{i-1}Y_{i+1}^n) + I(X_i;Y_{i+1}^n|I_AX^{i-1}) - I(X^{i-1};Y_i|I_AI_BY_{i+1}^n)\right] \ ,
	\end{aligned}
	\end{equation}
	where the final step is due to identity~\eqref{eq:Csiszar-Korner}. These inequalities lead to
	\begin{align}
	nR \geq  \sum_{i=1}^n \big[I(I_AX^{i-1}Y_{i+1^n};X_i) &+ I(I_B;Y_i|I_AX^{i-1}Y_{i+1}^n) \\
	&- I(X^{i-1};Y_i|I_AI_BY_{i+1}^n)\big] \ .\nonumber
	\end{align}
	Using the same definitions for the auxiliary RVs as above, this result can be expressed as follows:
	\begin{equation}
	R \geq I(\hat{U}_Q;X_Q|Q) + I(V_Q;Y_Q|\hat{U}_Q,Q) - T \ ,
	\end{equation}
	and thus, the following region is an outer bound:
	\begin{equation}\label{eq:tRegion}
	\begin{cases}
	\Delta_n \leq I(U;Y) + I(V;X|U) -T \ ,\\
	R \geq I(U;X) + I(V;Y|U) -T \ ,
	\end{cases}
	\end{equation}
	where $(U,V)$ are auxiliary RVs that respect the required Markov chains in~\eqref{Eq:TestingAgainstIndependence}. It is  left to show that~\eqref{eq:tRegion} is equivalent or stricter than:
		\begin{equation}\label{eq:FinalRegion}
		\begin{cases}
		\Delta_n \leq I(U;Y) + I(V;X|U) \ ,\\
		R \geq I(U;X) + I(V;Y|U) \ .
		\end{cases}
		\end{equation}
 That is, all pairs $(R,\Delta_n)$ that are forbidden in the region in \eqref{eq:tRegion} are also forbidden in~\eqref{eq:FinalRegion}. In order to do so we use \emph{Fourier-Motzkin} elimination~\cite{Schrijver-1998} over $T\geq 0$. 
		By removing $T$,  we get:
		\begin{equation}
		\begin{cases}
		\Delta_n \leq  I(U;Y) + I(V;X|U) \ , \\
		R \geq I(U;X) + I(V;Y|U) - I(U;Y) - I(V;X|U) + \Delta_n  \ ,
		\end{cases}
		\end{equation}
		and using the Markovian relations between the different RVs we obtain:
		\begin{equation}\label{eq:AlmostFinalRegion}
		\begin{cases}
		\Delta_n \leq  I(U;Y) + I(V;X|U) \ , \\
		R \geq I(U;X|Y) + I(V;Y|UX) + \Delta_n  \ .
		\end{cases}
		\end{equation}
		 In order to show the equivalence between the two regions, we need to check the extremal points. The point where $\Delta_n = 0$ is trivial, as $R = 0$ is optimal under both regions. When checking $\Delta_n =  I(U;Y) + I(V;X|U)$ we have:
		\begin{align}
		R &\geq I(U;X|Y) + I(V;Y|UX) + I(U;Y) + I(V;X|U)\\
		 &= I(U;X) + I(V;Y|U) \ ,\nonumber
		\end{align}
		which completes the proof of the weak unfeasibility.
\end{proof}

\begin{remark}
We conjecture that in contrast to the unidirectional testing problem~\cite{Ahlswede-Csiszar-1986}, the strong unfeasibility property --implying that the error exponent does not depend on $\epsilon$-- does not hold for the collaborative hypothesis testing problem. A possible reason for this failure is that such a property heavily relies on the Blowing Up lemma (see Lemma~\ref{Lemma:BlowingUp}) which does not hold conditioned on arbitrary probability events (e.g. the corresponding event induced from the first information layer).
\end{remark}

\section{Collaborative Hypothesis Testing with Zero Rate}\label{Sec:ZeroRate}
We now consider another special case of Proposition~\ref{Prop:ManyRounds}, whereby testing is done over two general hypotheses, but the total exchange rate is zero. It is worth mention that zero-rate does not mean that \emph{no information exchange} is possible, but rather that the size of the codebook grows slower than exponentially with the blocklength $n$, as stated in the following proposition. 
 \begin{theorem}[Necessary and sufficient conditions under zero-rate] \label{Prop:TestingWithZeroRate}
 	Let $P_{XY}$ and $P_{\bX\bY}$ be any probability measures such that 
	$\textrm{supp}(P_{\bX\bY})=\textrm{supp}(P_{XY})$ $= \mathcal{X}\times \mathcal{Y}$. Assume the total exchange rate $R=0$, that is:
 	\begin{equation}\label{eq:ConstraintZeroRate}
 	\sum \limits_{k=1}^K  \log |  f_{[k]} | +  \sum \limits_{k=1}^K  \log   |  g_{[k]} |  \equiv o(n) \ ,
 	\end{equation}
 	the optimal error exponent to the probability of Type II is given by
 	\begin{align}\label{eq:ExponentZeroRate}
 	\lim\limits_{n \to \infty} -\frac{1}{n} \log \beta_n (R = 0,\epsilon\, | K) &= \\
 \min\limits_{\tX\tY \in \mathscr{L}_0(X,Y)}\mathcal{D}(P_{\tX\tY}\|P_{\bX\bY}) &\coloneqq E(R=0) \ ,\ \, \forall\,0<\epsilon<1\ ,\nonumber
 	\end{align}
 	where $\mathscr{L}_0(X,Y) \coloneqq \big\{\tX\tY: P_{\tX}=P_{X}\, , \,P_{\tY} = P_Y\big\}$.
 \end{theorem}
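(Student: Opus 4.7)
The plan is to split the proof into achievability and unfeasibility. Achievability follows from a very simple two-bit protocol (one bit per statistician), while unfeasibility is the substantive part; I would attack it by combining the rectangular decomposition of the acceptance region with the Blowing Up Lemma and the full-support hypothesis.

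For the achievability, the scheme I would use is: node $A$ transmits $\mathbf{1}\{\vct{x}\in\cT^n_{[X]_\delta}\}$, node $B$ transmits $\mathbf{1}\{\vct{y}\in\cT^n_{[Y]_\delta}\}$, and the decision mapping declares $H_0$ if and only if both bits equal~$1$. This trivially respects the zero-rate constraint \eqref{eq:ConstraintZeroRate} with $K=1$. Under $H_0$, the AEP gives $\alpha_n\to 0$; under $H_1$, the acceptance event is the product $\{\vct{x}\in\cT^n_{[X]_\delta}\}\cap\{\vct{y}\in\cT^n_{[Y]_\delta}\}$, and counting joint types whose marginals fall in $\delta$-balls around $P_X$ and $P_Y$ bounds its $P_{\bX\bY}^n$-probability by $\exp\{-n\min\cD(P_{\tX\tY}\|P_{\bX\bY})\}$ over such joints. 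Letting $\delta\to 0$ and using continuity of $\cD$ recovers $E(R=0)$.

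For the unfeasibility, the first move is to exploit the combinatorial structure of the acceptance region. Since at each round a message is a function of one party's observations and of the already-transmitted messages, the preimage of any fixed message history $\vct{m}=(m_{[1]},\dots,m_{[2K]})$ is a rectangle $A_{\vct{m}}\times B_{\vct{m}}\subseteq\mathcal{X}^n\times\mathcal{Y}^n$; after further slicing by the decision, $\mathcal{A}_n$ becomes a disjoint union of at most $\prod_{k}|f_{[k]}||g_{[k]}|=2^{o(n)}$ rectangles. Together with $\alpha_n\leq\epsilon<1$, a pigeonhole argument produces one rectangle $A\times B$ with $P_{XY}^n(A\times B)\geq \exp(-o(n))$, which in turn forces $P_X^n(A)\geq\exp(-o(n))$ and $P_Y^n(B)\geq\exp(-o(n))$. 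Applying the Blowing Up Lemma (Lemma~\ref{Lemma:BlowingUp}) to $A$ under $P_X^n$ and to $B$ under $P_Y^n$ produces $l_n=o(n)$ with $P_X^n([A]_{l_n})\to 1$ and $P_Y^n([B]_{l_n})\to 1$. The full-support assumption $\textrm{supp}(P_{\bX\bY})=\mathcal{X}\times\mathcal{Y}$ is now crucial: a Hamming-distance and counting argument, using that the probability ratio of neighbouring sequences under $P_{\bX\bY}$ is bounded by a constant raised to the Hamming distance, yields
\begin{equation*}
P_{\bX\bY}^n\!\big([A]_{l_n}\times[B]_{l_n}\big)\leq 2^{o(n)}\cdot P_{\bX\bY}^n(A\times B).
\end{equation*}
To close the argument I would pick $\tX^\star\tY^\star$ attaining the minimum in \eqref{eq:ExponentZeroRate} (approximated by an $n$-type if necessary); since its marginals equal $P_X$ and $P_Y$, sampling i.i.d.\ from $P_{\tX^\star\tY^\star}$ places asymptotic mass one on $[A]_{l_n}\times[B]_{l_n}$, and Lemma~\ref{Lemma:ProbabilityByType} converts this into $P_{\bX\bY}^n([A]_{l_n}\times[B]_{l_n})\geq \exp\{-n[\cD(P_{\tX^\star\tY^\star}\|P_{\bX\bY})+o(1)]\}$. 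Chaining these two inequalities with $\beta_n\geq P_{\bX\bY}^n(A\times B)$ delivers the converse for every $0<\epsilon<1$, i.e.\ the strong (Stein-type) unfeasibility stated in the theorem.

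The hardest step will be the alternative-probability transfer under blowing up: without the full-support hypothesis, the ratio $P_{\bX\bY}(x,y)/P_{\bX\bY}(x',y')$ can be infinite, and the bound displayed above collapses, which is why the assumption in the statement cannot be dropped without additional care. Incidentally, the very same mechanism flagged in Remark~2 explains why this converse is not expected to extend to positive rates, where conditioning on the first-round message singles out events of exponentially small probability on which the Blowing Up Lemma is no longer effective.
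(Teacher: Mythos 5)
Your proposal is correct, and its converse part is essentially the paper's own argument (which follows Shalaby--Papamarcou): rectangular decomposition of the acceptance region induced by the transcript, pigeonhole over the $2^{o(n)}$ rectangles to extract one product set $\cC\times\cF$ with $P_{XY}^n(\cC\times\cF)\geq\exp(-n\delta_n)$, the non-conditional Blowing Up Lemma applied separately to the two marginals, the Hamming-ball counting bound $P_{\bX\bY}^n(\Gamma^{k_n}\cC\times\Gamma^{k_n}\cF)\leq\exp(n\zeta_n)P_{\bX\bY}^n(\cC\times\cF)$ which is exactly where $\textrm{supp}(P_{\bX\bY})=\cX\times\cY$ enters via $\rho=\min P_{\bX\bY}(x,y)>0$, and finally the change of measure from $P_{\tX\tY}$ to $P_{\bX\bY}$; the paper routes this last step through a single type class $\cT_{[\hat X\hat Y]}$ close to $P_{\tX\tY}$ and the equiprobability of sequences within a type class, which is the precise form of your compressed ``Lemma~\ref{Lemma:ProbabilityByType} converts this'' step, so you should spell that out, but the idea is the same. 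Your handling of general $K$ by fixing the whole transcript at once is in fact slightly cleaner than the paper's, which proves $K=1$ and asserts the extension is straightforward. Where you genuinely diverge is the feasibility part: you give a direct two-bit typicality-check protocol and bound $P_{\bX\bY}^n(\cT_{[X]_\delta}\times\cT_{[Y]_\delta})$ by counting joint types with near-correct marginals, which is self-contained and is essentially Han's complete-data-compression scheme (the paper itself notes one unidirectional bit suffices); the paper instead derives feasibility as a corollary of Proposition~\ref{Prop:Main} by taking $V$ degenerate and peeling off $\cD(P_{\tU|\tX\tY}\|P_{\bU|\bX\bY}|P_{\tX\tY})\geq 0$ with the chain rule. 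Your route buys independence from the heavier random-coding machinery of Proposition~\ref{Prop:Main}; the paper's buys consistency, exhibiting the zero-rate exponent as a specialization of the general result. Your closing observation on why the argument breaks for positive rates matches the paper's Remark on the Blowing Up Lemma failing under conditioning.
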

It is worth mentioning that the same expression \eqref{eq:ExponentZeroRate} was proven in~\cite{Han-1987} to be feasible based on \emph{unidirectional one bit exchange}, i.e., $ |  f_{[1]} | = 2$,  $|  g_{[1]} |  = 0$. This observation implies that when zero-rate is enforced, not only data exchanges do not help, but only one bit of exchange is enough. In addition, note that this is a \emph{strong unfeasability} result, as the optimal exponent for $\beta_n$ is not dependent on the constraint $\epsilon$ over the error probability of Type I.
 
 \begin{proof}[Proof of Theorem~\ref{Prop:TestingWithZeroRate}]
From the expression of the error exponent in~\eqref{eq:ExponentZeroRate}, it is clear that it is enough to show the result for $K=1$, since it is  feasible with one round and the extension of the unfeasibility proof  is straightforward.
 	We start by proving the feasibility of the error exponent in \eqref{eq:ExponentZeroRate}  and then, we prove the unfeasibility result using methods similar to the ones in~\cite{Shalaby-Papamarcou-1992} for the case of a unidirectional exchanges.
 	
 	\subsubsection*{Feasibility}
 As the error exponent in \eqref{eq:ExponentZeroRate} is feasible  with single-side exchange, we use Proposition~\ref{Prop:Main} setting $V = \phi$. Thus, a feasible error exponent for zero-rate, as defined in Theorem~\ref{Prop:TestingWithZeroRate}:
 	\begin{equation}
\liminf\limits_{n \to \infty} -\frac{1}{n} \log \beta_n (R = 0,\epsilon\, | K)\geq \max_{\mathscr{S}(R=0)}\min_{\mathscr{L}(U,X,Y)} \cD(P_{\tU\tX\tY}||P_{\bU\bX\bY}) \ ,
 	\end{equation}
 	where $\mathscr{S}$ and $\mathscr{L}$ are the sets defined in Proposition~\ref{Prop:Main}. Using the chain rule for KL divergence, this exponent can be bounded as follows:
 	\begin{equation}
 	\begin{aligned}\label{eq-missing-inequality}
\max_{\mathscr{S}(R=0)}&\min_{\mathscr{L}(U,X,Y)} \cD(P_{\tU\tX\tY}||P_{\bU\bX\bY})\\
 	&= \max_{\mathscr{S}(R=0)}\min_{\mathscr{L}(U,X,Y)} \left[\cD(P_{\tX\tY}||P_{\bX\bY}) + \cD(P_{\tU|\tX\tY}||P_{\bU|\bX\bY}|P_{\tX\tY})\right]\\
 	&= \max_{\mathscr{S}(R=0)}\min_{\mathscr{L}_0(X,Y)}\left[\cD(P_{\tX\tY}||P_{\bX\bY}) + \min\limits_{P_{\tU|\tX\tY}}\cD(P_{\tU|\tX\tY}||P_{\bU|\bX\bY}|P_{\tX\tY})\right] \\
 	&\geq \min_{\mathscr{L}_0(X,Y)}\cD(P_{\tX\tY}||P_{\bX\bY}) \ .
 	\end{aligned}
 	\end{equation}
 	Here, the minimum over $P_{\tU|\tX\tY}$ is such that $\tU\tX\tY \in \mathscr{L}(U,X,Y)$, and the final inequality is due to the non-negativity of the KL divergence. 

 	\subsubsection*{Strong unfeasibility}
We now prove the optimality of Theorem~\ref{Prop:TestingWithZeroRate}, by showing that the error exponent of $\beta_n(R=0,\epsilon)$ does not depend on $\epsilon \in (0,1)$, and that \eqref{eq:ExponentZeroRate} cannot be beaten. We follow a similar approach to~\cite{Shalaby-Papamarcou-1992}, which addressed this proof for the case of unidirectional exchanges.
 	
 Let $f_{[1]}: \mathcal{X}^n  \rightarrow \{1,\dots,|  f_{[1]} |\}$  and $g_{[1]} : \mathcal{Y}^n \times \{1,\dots,| f_{[1]}| \}\rightarrow\{1,\dots, |  g_{[1]} |\}$ be  the encoding functions at node $A$ and $B$, respectively, and let $\phi\big(X^n, g_{[1]} (Y^n, $ $f_{[1]} (X^n))\big) \in \{0,1\}$ be the decoding function at node $A$. Define sets:
 	\begin{equation*}
 	\begin{aligned}
	\mathcal{C}_{ij} &\coloneqq \big \{\vct{x} \in \cX^n: f_{[1]}(\vct{x}) = i\, \textrm{ and } \phi(\vct{x},j)= 0\big\}\ ,  \ 
	 \cC_i \coloneqq \bigcup\limits_{i=1}^{|  f_{[1]} |}  \cC_{ij} \ ,\\
	\mathcal{F}_{ij} &\coloneqq \big\{\vct{y} \in \mathcal{Y}^n: g_{[1]}(\vct{y},i) = j\big\} \ , \  (i,j) \in \{1,\ldots,|  f_{[1]} |\}\times \{1,\ldots,|  g_{[1]} |\} \ .
	\end{aligned}
 	\end{equation*}
 	Note that $\cC_{ij}$ (respectively, $\cF_{ij}$) cannot be said to be pairwise disjoint in $\cX^n$ (respectively, $\cY^n$) while the sets $\cC_i$ are pairwise disjoint. Similarly, for each index $i_0$, the sets $\cF_{i_0j}$ are disjoint. The acceptance set of $H_0$ can be expressed by
 	\begin{equation}
 	\cA_n \coloneqq  \bigcup\limits_{i=1}^{|  f_{[1]} |}\bigcup\limits_{j=1}^{|  g_{[1]} |} \cC_{ij}\times\cF_{ij} \ .
 	\end{equation}
 	That is, if $(\vct{x},\vct{y}) \in \cA_n$, $\phi\big(\vct{x}, g_{[1]} (\vct{y}, f_{[1]} (\vct{x}))\big) = 0$ and otherwise, the result is $H_1$. By the definition, $P_{XY}^n(\cA_n^c) \leq \epsilon$, or equivalently 
 	\begin{equation}\label{eq:LimitSetA}
 	P_{XY}^n(\cA_n) = P_{XY}^n\left(\bigcup\limits_{i=1}^{|  f_{[1]} |}\bigcup\limits_{j=1}^{|  g_{[1]} |} \cC_{ij}\times\cF_{ij}\right)> 1 - \epsilon \ .
 	\end{equation}
	Since the sets $\mathcal{B}_i \coloneqq \bigcup\limits_{j=1}^{|  g_{[1]} |} \cC_{ij}\times\cF_{ij}$ are disjoint,  by relying on \eqref{eq:LimitSetA} and on the size $|  f_{[1]} |$, there exists an index $i_0$ such that 
 	\begin{equation}\label{eq:LimitSetB}
 	P_{XY}^n\left(\bigcup\limits_{j=1}^{|  g_{[1]} |} \cC_{i_0j}\times\cF_{i_0j}\right)\geq \frac{1 - \epsilon}{|  f_{[1]} |} \ .
 	\end{equation}
 	As the sets $F_{i_0j}$ are disjoint, there exists an index $j_0$ such that
 	\begin{equation}
 	P_{XY}^n(\cC_{i_0j_0}\times\cF_{i_0j_0})\geq \frac{1 - \epsilon}{|  f_{[1]} ||  g_{[1]} |}  \ .
 	\end{equation}
 	Letting $\cC \equiv \cC_{i_0j_0}$ and $\cF \equiv \cF_{i_0j_0}$, we rewrite this as:
 	\begin{equation}
 	P_{XY}^n(\cC\times\cF) \geq \frac{1-\epsilon}{|  f_{[1]} ||  g_{[1]} |} \equiv  \exp(-n\delta_n) \ ,
 	\end{equation}
 	with $\delta_n \equiv  \frac{1}{n}\log\left(|  f_{[1]} ||  g_{[1]} |\right) - \frac{1}{n}\log(1-\epsilon)$.
 	As the log-function is monotonic and both $|  f_{[1]} |$ and $|  g_{[1]} |$ are non-negative, expression~\eqref{eq:ConstraintZeroRate} implies that $\log |  f_{[1]} |=o(n)$ and $\log |  g_{[1]} | =o(n)$  and thus $\delta_n =o(1)$.
 	
 Having shown that there exist sets $\cC$ and $\cF$, such that $\cC\times\cF \in \cA_n$, and the probability $P_{XY}(\cC\times\cF)$ does not approach $0$ exponentially with $n$, the rest of the proof follows along the lines in~\cite{Shalaby-Papamarcou-1992}. For the sake of completeness, this proof is completed in Appendix~\ref{Apen:ZeroRate}.
 \end{proof}

\appendix
\section{Technical Definitions and Lemmas} \label{Apen:typicality}

In this appendix, we revise fundamental notions and properties of \emph{method of types}~\cite{Csiszar-1998}, which are extensively used through this paper. 

\begin{definition}[Types~\cite{Csiszar-Korner-2011}]
	The type of a sequence $\vct{x} \in \cX^n$ is the measure $\hat{P}_X$ on $\mathcal{X}$ defined by
	$
	\hat{P}_X(a) \coloneqq  \frac{1}{n} N(a|\vct{x}) \ ,\quad \forall a \in \cX\ ,
$
	where $N(a|\vct{x})$ is the counting measure of the letter $a$ in $\vct{x}$.
	The \emph{joint type} of a pair $(\vct{x},\vct{y}) \in \cX^n\times\cY^n$ is the empirical  measure $\hat{P}_{XY}$ on $\cX\times\cY$ such that
	\begin{equation}
	\hat{P}_{XY}(a,b) \coloneqq \frac{1}{n} N(a,b|\vct{x},\vct{y}) \ ,\quad \forall (a,b) \in \cX\times\cY \ ,
	\end{equation}
	where $N(a,b|\vct{x},\vct{y})$ is  the joint counting measure of the pair $(a,b)$ in $(\vct{x},\vct{y})$.
\end{definition}
	
	\begin{definition}[Conditional Types~\cite{Csiszar-Korner-2011}]
		The vector $\vct{y} \in \cY^n$ is said to have \emph{conditional type} $V:\mathcal{X}\mapsto \mathcal{P}_n(\mathcal{Y})$ given $\vct{x} \in \cX^n$ if 
		\begin{equation}
		N(a,b|\vct{x},\vct{y}) = N(a|\vct{x})V(b|a)\ , \quad \forall (a,b) \in \cX\times\cY \ ,
		\end{equation}
		where $V$ is a stochastic mapping.
	\end{definition}
	
\begin{lemma}[Type Counting]\label{Lemma:TypeCounting}
Let  $\cP_n(\mathcal{X})$ be the set of all possible types of sequences in $\cX^n$. Then, $|\cP_n(\mathcal{X}) | \leq (n+1)^{|\cX|} \ .$
\begin{proof}
Refer to  reference~\cite[Lemma 2.2]{Csiszar-Korner-2011}.
\end{proof}

\end{lemma}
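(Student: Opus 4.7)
The plan is to bound $|\mathcal{P}_n(\mathcal{X})|$ by an elementary combinatorial counting argument. The key observation is that every type $\hat{P}_X \in \mathcal{P}_n(\mathcal{X})$ is completely determined by the integer-valued vector of counts $\bigl(N(a|\vct{x})\bigr)_{a \in \mathcal{X}}$, since by definition $\hat{P}_X(a) = N(a|\vct{x})/n$ for every $a \in \mathcal{X}$. So bounding the number of types reduces to bounding the number of admissible count-vectors.

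Next, I would observe that for each letter $a \in \mathcal{X}$, the count $N(a|\vct{x})$ is a non-negative integer bounded above by the block length $n$, hence $N(a|\vct{x}) \in \{0,1,\ldots,n\}$, a set of cardinality $n+1$. The map sending a type to its count-vector is therefore an injection from $\mathcal{P}_n(\mathcal{X})$ into the product set $\{0,1,\ldots,n\}^{|\mathcal{X}|}$. Taking cardinalities yields $|\mathcal{P}_n(\mathcal{X})| \leq (n+1)^{|\mathcal{X}|}$, which is exactly the claimed bound.

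There is essentially no obstacle in this argument; the only subtlety worth flagging is that the bound is deliberately loose, since the counts are constrained by $\sum_{a\in\mathcal{X}} N(a|\vct{x}) = n$ (so a sharper bound of order $\binom{n + |\mathcal{X}|-1}{|\mathcal{X}|-1}$ holds). However, the loose polynomial bound is what is needed throughout the paper: it is precisely the prefactor appearing in \eqref{eq:BoundOnBetaMax} (and its higher-dimensional analogues over alphabets $\mathcal{U}\times\mathcal{V}\times\mathcal{X}\times\mathcal{Y}$), and the whole point is that any quantity polynomial in $n$ contributes $o(1)$ to $-\frac{1}{n}\log(\cdot)$ and therefore disappears in the exponent as $n\to\infty$. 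For this asymptotic use, the crude bound $(n+1)^{|\mathcal{X}|}$ is entirely sufficient, so no refinement of the counting step is needed.
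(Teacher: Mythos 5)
Your argument is correct and is precisely the standard counting proof from the cited reference (Csiszár–Körner, Lemma 2.2): a type is determined by its count vector, each count lies in $\{0,1,\dots,n\}$, and the resulting injection into $\{0,1,\dots,n\}^{|\mathcal{X}|}$ gives the bound. The paper simply cites this result, so there is nothing to compare beyond noting that your write-up supplies the omitted details accurately, including the correct observation that the loose polynomial bound suffices for the exponent analysis.
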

\begin{lemma}\label{Lemma:SizeTypicalSet}
	For any type $\hat{P}\in \cP_n(\mathcal{X})$ of sequences in $\cX^n$, denote by $\cT_{[\hat{P}]}$ the set of all sequences with this type. Then,
	\begin{equation}
	(n+1)^{-|\cX |} \exp \big[n H(\hat{P}) \big] \leq |\cT_{[\hat{P}]}| \leq \exp\big[nH(\hat{P})\big] \ .
	\end{equation}
	In a similar fashion, for every $\vct{x} \in \cX^n$ and stochastic mapping $V:\mathcal{X}\mapsto \mathcal{P}_n(\mathcal{Y})$, let $\cT_{[V]}(\vct{x})$ be the set of all sequences $\vct{y} \in \cY^n$ with the conditional type $V$ given $\vct{x}$. Then,
	\begin{equation}
	(n+1)^{-|\cX||\cY|}\exp\big[ nH(V|\hat{P})\big] \leq |\cT_{[V]}(\vct{x})|\leq \exp\big[nH(V|\hat{P})\big] \ ,
	\end{equation}
	where $H(V|\hat{P})$ is the conditional entropy function,
	\begin{equation}
	H(V|\hat{P}) = \sum\limits_{x \in \cX} \hat{P}(x)H(V(\cdot|x)) \ .
	\end{equation}
\end{lemma}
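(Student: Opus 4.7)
The plan is to prove the four cardinality bounds by a standard probabilistic/counting argument built on two observations: (i) under a product measure induced by a type $\hat P$, every sequence of that same type has exactly the same probability $\exp[-nH(\hat P)]$; and (ii) among all type classes in $\mathcal{P}_n(\mathcal{X})$, the product measure $\hat P^n$ assigns maximal total probability to its own type class $\mathcal{T}_{[\hat P]}$. Combined with the type-counting bound $|\mathcal{P}_n(\mathcal{X})|\leq (n+1)^{|\mathcal{X}|}$ (already available from Lemma~\ref{Lemma:TypeCounting}), these two facts will yield both inequalities for the unconditional part. The conditional part will then follow by decomposing $\vct{y}$ according to the positions where $\vct{x}$ takes each value $a\in\mathcal{X}$ and applying the unconditional bounds on each sub-block of length $N(a|\vct{x})$.

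Concretely, I would first verify (i): for any $\vct{x}\in\mathcal{T}_{[\hat P]}$,
\begin{equation*}
\hat P^n(\vct{x}) \;=\; \prod_{a\in\mathcal{X}} \hat P(a)^{N(a|\vct{x})} \;=\; \prod_{a\in\mathcal{X}} \hat P(a)^{n\hat P(a)} \;=\; \exp[-nH(\hat P)].
\end{equation*}
This immediately gives the upper bound $|\mathcal{T}_{[\hat P]}|\leq \exp[nH(\hat P)]$, since $1\geq \hat P^n(\mathcal{T}_{[\hat P]})=|\mathcal{T}_{[\hat P]}|\exp[-nH(\hat P)]$.

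For the lower bound, I would establish the maximality claim: for every $Q\in\mathcal{P}_n(\mathcal{X})$, $\hat P^n(\mathcal{T}_{[\hat P]})\geq \hat P^n(\mathcal{T}_{[Q]})$. The key combinatorial step is the multinomial-coefficient ratio identity
\begin{equation*}
\frac{|\mathcal{T}_{[\hat P]}|}{|\mathcal{T}_{[Q]}|} \;=\; \prod_{a\in\mathcal{X}} \frac{(nQ(a))!}{(n\hat P(a))!},
\end{equation*}
together with the elementary fact that $m!/k!\geq k^{m-k}$ for positive integers $m,k$. Substituting and recognizing $\sum_a (\hat P(a)-Q(a))\log Q(a)$ as (the negative of) a divergence-like quantity shows the desired inequality. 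Then, summing the total probability over all types,
\begin{equation*}
1 \;=\; \sum_{Q\in\mathcal{P}_n(\mathcal{X})} \hat P^n(\mathcal{T}_{[Q]}) \;\leq\; (n+1)^{|\mathcal{X}|}\,\hat P^n(\mathcal{T}_{[\hat P]}),
\end{equation*}
which rearranges to $|\mathcal{T}_{[\hat P]}|\geq (n+1)^{-|\mathcal{X}|}\exp[nH(\hat P)]$.

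For the conditional-type bounds I would reduce to the unconditional case. Fix $\vct{x}$ with type $\hat P$ and, for each $a\in\mathcal{X}$ with $\hat P(a)>0$, consider the sub-block of $\vct{y}$ at the $N(a|\vct{x})=n\hat P(a)$ coordinates where $\vct{x}_i=a$; a sequence $\vct{y}$ has conditional type $V$ given $\vct{x}$ iff each such sub-block has (unconditional) type $V(\cdot|a)$. Since these sub-blocks are chosen independently, $|\mathcal{T}_{[V]}(\vct{x})|=\prod_{a}|\mathcal{T}_{[V(\cdot|a)]}|$ (the product of type-class sizes in $\mathcal{Y}^{n\hat P(a)}$). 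Applying the already-proved bounds to each factor and multiplying yields
\begin{equation*}
\prod_{a\in\mathcal{X}} (n\hat P(a)+1)^{-|\mathcal{Y}|}\exp[n\hat P(a) H(V(\cdot|a))] \;\leq\; |\mathcal{T}_{[V]}(\vct{x})|\;\leq\; \prod_{a} \exp[n\hat P(a) H(V(\cdot|a))],
\end{equation*}
which, by the definition $H(V|\hat P)=\sum_a \hat P(a) H(V(\cdot|a))$ and the loose bound $\prod_a (n\hat P(a)+1)\leq (n+1)^{|\mathcal{X}|}$ on the polynomial prefactor, gives the stated conditional bounds.

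The main obstacle I anticipate is the maximality step $\hat P^n(\mathcal{T}_{[\hat P]})\geq \hat P^n(\mathcal{T}_{[Q]})$: writing the ratio as a product of factorial ratios and bounding it correctly requires care with the $m!/k!\geq k^{m-k}$ inequality and careful algebraic reorganization; everything else is essentially accounting.
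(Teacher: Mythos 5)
Your proposal is correct and follows exactly the standard method-of-types argument of Csisz\'ar--K\"orner (Lemmas 2.3 and 2.5), which is all the paper does here by citation: equiprobability of sequences within a type class for the upper bound, maximality of $\hat P^n(\cT_{[\hat P]})$ among type classes via the factorial-ratio identity and $m!/k!\geq k^{m-k}$ for the lower bound, and a per-letter sub-block decomposition for the conditional case. One small bookkeeping remark: in the maximality step the exponents cancel simply because $\sum_a \hat P(a)=\sum_a Q(a)=1$ (the $\log\hat P(a)$ terms from the factorial bound and from the probability ratio annihilate each other, leaving $n^{n\sum_a(Q(a)-\hat P(a))}=1$), so no ``divergence-like'' quantity actually needs to be invoked.
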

\begin{proof}
Refer to reference~\cite[Lemma 2.3, Lemma 2.5]{Csiszar-Korner-2011}.
\end{proof}

\begin{lemma}[Inaccuracy]\label{Lemma:ProbabilityByType}
	Let $\hat{P} \in \cP_n(\mathcal{X})$ be the type of $\vct{x} \in \cX^n$ ($X^{(n)} \sim \hat{P} $ is referred to as the \emph{type variable}). Then, for any RV $X$ on $(\cX,\mathcal{B}_{\cX},P_X)$, 
	\begin{equation}
	P_{X}^n(X^n=\vct{x}) = \exp \Bigl\{-n\left[H(\hat{P}) + \cD(\hat{P}\|P_X)\right]\Bigr\} \ .
	\end{equation} 
\end{lemma}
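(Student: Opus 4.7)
The plan is to exploit the i.i.d. structure of $X^n$ to write $P_X^n(\vct{x})$ as a product indexed by the letters of $\vct{x}$, then regroup the product by alphabet symbol using the counting measure $N(a\mid\vct{x})$, and finally convert the exponents to the type $\hat{P}$ to recognize $H(\hat P)+\cD(\hat P\|P_X)$ by the ``add and subtract $\log \hat{P}$'' trick.

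Concretely, the first step is to observe that since $X^n$ is memoryless,
\begin{equation*}
P_X^n(X^n=\vct{x}) \;=\; \prod_{i=1}^n P_X(x_i) \;=\; \prod_{a\in\cX} P_X(a)^{N(a\mid \vct{x})} \;=\; \prod_{a\in\cX} P_X(a)^{n\hat{P}(a)},
\end{equation*}
where the last identity uses the definition $\hat{P}(a)=n^{-1}N(a\mid\vct{x})$. This step is purely combinatorial and should carry no subtlety, as long as we are careful about the convention that $0\log 0 = 0$ (so symbols $a$ with $\hat{P}(a)=0$ contribute a factor of $1$, even if $P_X(a)=0$, which never occurs under the implicit assumption $\hat P\ll P_X$).

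The second step is to take logarithms and rewrite the exponent. Passing to $\log_2$,
\begin{equation*}
\log P_X^n(X^n=\vct{x}) \;=\; n\sum_{a\in\cX}\hat{P}(a)\log P_X(a) \;=\; n\sum_{a\in\cX}\hat{P}(a)\bigl[\log \hat{P}(a)+\log\tfrac{P_X(a)}{\hat{P}(a)}\bigr].
\end{equation*}
Recognising the two sums as $-H(\hat{P})$ and $-\cD(\hat{P}\|P_X)$ respectively yields
\begin{equation*}
\log P_X^n(X^n=\vct{x}) \;=\; -n\bigl[H(\hat{P})+\cD(\hat{P}\|P_X)\bigr],
\end{equation*}
and exponentiating gives the claimed identity.

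Because the argument is an identity rather than a bound, there is no genuine obstacle; the only subtle point worth flagging is the domain on which the divergence is defined, i.e.\ the implicit absolute-continuity assumption $\hat{P}\ll P_X$. If some symbol with $\hat{P}(a)>0$ had $P_X(a)=0$, then $P_X^n(\vct{x})=0$ and $\cD(\hat{P}\|P_X)=+\infty$, so both sides of the stated equality are consistently $0=\exp(-\infty)$ under the usual conventions. Once this convention is fixed, the proof is a one-line regrouping of the product.
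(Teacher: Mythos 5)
Your proof is correct: the regrouping of the i.i.d.\ product by alphabet symbol and the add-and-subtract of $\log\hat{P}(a)$ is exactly the canonical argument, and your handling of the $\hat{P}\ll P_X$ convention is the right caveat. The paper itself does not spell out a proof but simply cites \cite[Lemma 3]{Han-1987} and \cite[Lemma 2.6]{Csiszar-Korner-2011}, where this same computation appears, so your argument matches the intended one.
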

	\begin{proof}
Refer to reference~\cite[Lemma 3]{Han-1987},\cite[Lemma 2.6]{Csiszar-Korner-2011}.
\end{proof}

\begin{definition}[$\delta$-Typicality~\cite{Han-1987}]\label{Def:DeltaTypicality}
	Let  $\delta>0$, an $n$-sequence $\vct{x}$ is called $\delta$-typical, denoted by $\cT_{[X]_\delta}$, if
$|N(a|\vct{x}) - nP_X(a)| \leq \mathcal{O}(\delta), \quad \forall a \in \cX \ ,
$ and $\hat{P}_X \ll P_X$. Jointly $\delta$-typical $\cT_{[XY]_\delta}$ and conditionally $\delta$-typical sequences $\cT_{[Y|X]_\delta}(\vct{x})$ are defined in a similar manner.
\end{definition}
	
	\begin{lemma}\label{Lemma:JointConditionalTypicality}
		Let $\cT_{[X]_\delta}$, $\cT_{[XY]_\delta}$ and $\cT_{[Y|X]_\delta}$ denote the sets of typical, jointly typical and conditionally typical sequences, respectively. For any $\vct{x} \in \cT_{[X]_\delta}$ and $\vct{y} \in \cT_{[Y|X]_{\delta'}}$, then $(\vct{x},\vct{y}) \in \cT_{[XY]_{\delta + \delta'}}$. Moreover, $\vct{y} \in \cT_{[Y]_{\delta''}}$, with $\delta'' \coloneqq  (\delta+\delta')|\cX|$.
\end{lemma}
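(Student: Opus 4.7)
The plan is to prove the two claims in sequence, both via the triangle inequality combined with the definitions of $\delta$-typicality and conditional $\delta$-typicality as given in Definition~\ref{Def:DeltaTypicality}. The underlying idea is that joint typicality is controlled by conditional typicality plus marginal typicality, and the marginal on the second coordinate is controlled by summing over the first.

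\textbf{First claim: joint typicality.} Fix $(a,b)\in\mathcal{X}\times\mathcal{Y}$. I would start by writing the identity
\begin{equation*}
N(a,b|\vct{x},\vct{y}) - nP_{XY}(a,b)
= \bigl[N(a,b|\vct{x},\vct{y}) - N(a|\vct{x})P_{Y|X}(b|a)\bigr] + P_{Y|X}(b|a)\bigl[N(a|\vct{x}) - nP_X(a)\bigr],
\end{equation*}
using the factorization $P_{XY}(a,b)=P_X(a)P_{Y|X}(b|a)$. Applying the triangle inequality, the first bracket is bounded by $\mathcal{O}(\delta')$ since $\vct{y}\in\cT_{[Y|X]_{\delta'}}(\vct{x})$, and the second bracket is bounded by $P_{Y|X}(b|a)\cdot\mathcal{O}(\delta)\leq \mathcal{O}(\delta)$ since $\vct{x}\in\cT_{[X]_\delta}$ and the conditional probability is at most one. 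Summing, I get $|N(a,b|\vct{x},\vct{y})-nP_{XY}(a,b)|\leq \mathcal{O}(\delta+\delta')$, which is precisely the joint $\delta$-typicality condition at level $\delta+\delta'$ for the pair $(\vct{x},\vct{y})$. The absolute-continuity condition $\hat{P}_{XY}\ll P_{XY}$ follows directly from the corresponding conditions on $\hat{P}_X\ll P_X$ and $\hat{P}_{Y|X}\ll P_{Y|X}$ which are implicit in the conditional-type definition.

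\textbf{Second claim: marginal typicality of $\vct{y}$.} I would use the counting identity $N(b|\vct{y})=\sum_{a\in\mathcal{X}} N(a,b|\vct{x},\vct{y})$ together with $P_Y(b)=\sum_{a\in\mathcal{X}} P_{XY}(a,b)$. Then, for every $b\in\mathcal{Y}$,
\begin{equation*}
\bigl|N(b|\vct{y})-nP_Y(b)\bigr|
= \Bigl|\sum_{a\in\mathcal{X}}\bigl[N(a,b|\vct{x},\vct{y})-nP_{XY}(a,b)\bigr]\Bigr|
\leq \sum_{a\in\mathcal{X}}\bigl|N(a,b|\vct{x},\vct{y})-nP_{XY}(a,b)\bigr|,
\end{equation*}
and by the first claim each summand is $\mathcal{O}(\delta+\delta')$, giving a total of at most $|\mathcal{X}|\cdot\mathcal{O}(\delta+\delta')=\mathcal{O}(\delta'')$ with $\delta''=(\delta+\delta')|\mathcal{X}|$. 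This is exactly the $\delta''$-typicality condition for $\vct{y}$ with respect to $P_Y$.

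I do not expect any significant obstacle: the argument is essentially a bookkeeping exercise using the triangle inequality and the chain rule for pmfs, and both implications are standard in the method-of-types literature (see, e.g., \cite{Csiszar-Korner-2011}). The only subtlety is to be careful that the $\mathcal{O}(\cdot)$ constants in Definition~\ref{Def:DeltaTypicality} are tracked consistently, since the statement depends on them combining additively; this is why the second bound acquires the factor $|\mathcal{X}|$ from summing $|\mathcal{X}|$ error terms of size $\delta+\delta'$.
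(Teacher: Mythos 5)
Your proof is correct: the chain-rule decomposition of $N(a,b|\vct{x},\vct{y}) - nP_{XY}(a,b)$, the triangle inequality, and the summation over $\mathcal{X}$ (which is where the factor $|\mathcal{X}|$ in $\delta''$ comes from) constitute the standard method-of-types argument for this fact. The paper gives no proof of this lemma at all --- it simply refers to Csisz\'ar--K\"orner --- and your elementary argument is precisely the one that citation stands in for, so there is nothing to compare beyond noting that you have supplied the omitted details (including, correctly, the absolute-continuity bookkeeping).
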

\begin{proof}
	Refer to reference~\cite{Csiszar-Korner-2011}.
\end{proof}

     \begin{lemma}[Generalized Markov Lemma]
     	Let $p_{UXY}\in\mathcal{P}\left(\mathcal{U}\times\mathcal{X}\times\mathcal{Y}\right)$ be a probability measure that satisfies: $U \mkv X \mkv Y$. Consider $(\vct{x},\vct{y})\in\mathcal{T}^n_{[XY]_{\epsilon'}}$ and random vectors ${U}^n$  generated according to:
      	\begin{equation}
     	\label{eq:u_dist}
     	\Pr\left\{{U}^n=\vct{u}\big|{U}^n\in\mathcal{T}_{[U|X]_{\epsilon''}}^n(\vct{x}), \vct{x},\vct{y}\right\}=\frac{\mathds{1}\left\{{u}^n\in\mathcal{T}_{[U|X]_{\epsilon''}}^n(\vct{x})\right\}}{\big|\mathcal{T}_{[U|X]_{\epsilon''}}^n(\vct{x})\big|}\ .
     	\end{equation}
     	For sufficiently small $\epsilon,\epsilon',\epsilon''>0$,       	
	\begin{equation}
     	\Pr\left\{{U}^n\notin\mathcal{T}^n_{[U|XY]_{\epsilon}}(\vct{x},\vct{y})\Big|{U}^n\in\mathcal{T}^n_{[U|X]_{\epsilon''}}(\vct{x}),\vct{x},\vct{y}\right\}\equiv \mathcal{O}\left(c^{-n}\right)
     	\end{equation}
     	 holds uniformly on $(\vct{x},\vct{y})\in\mathcal{T}^n_{[XY]_{\epsilon'}}$ where $c>1$.
     	\label{Lemma:markov}
     	\end{lemma}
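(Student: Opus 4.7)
The plan is to exploit the Markov property $U \mkv X \mkv Y$, which forces $P_{U|XY}(u|x,y)=P_{U|X}(u|x)$, together with a method-of-types counting argument. Under this Markov chain, both the target of joint $\epsilon$-typicality with respect to $P_{UXY}$ and that of conditional $\epsilon''$-typicality with respect to $P_{U|X}$ point to the same empirical kernel $P_{U|X}$ for $\vct{u}$; the lemma is therefore a statement that uniform sampling over the larger (less informed) set $\mathcal{T}^n_{[U|X]_{\epsilon''}}(\vct{x})$ already concentrates on the smaller (more informed) jointly conditionally typical set $\mathcal{T}^n_{[U|XY]_{\epsilon}}(\vct{x},\vct{y})$ whenever $(\vct{x},\vct{y})\in\mathcal{T}^n_{[XY]_{\epsilon'}}$.

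First, I would parametrize the bad event by the conditional type $V:\mathcal{X}\times\mathcal{Y}\to\mathcal{P}_n(\mathcal{U})$ of $\vct{u}$ given $(\vct{x},\vct{y})$: the bad event is the union, over all $V$ whose induced joint measure with the empirical distribution $\hat{P}_{XY}$ of $(\vct{x},\vct{y})$ is $\epsilon$-far from $P_{UXY}$ while still projecting to within $\epsilon''$ of $P_{UX}$ on the $(U,X)$ coordinates, of the type classes $\mathcal{T}_{[V]}(\vct{x},\vct{y})$. Second, combining Lemma~\ref{Lemma:SizeTypicalSet}, which gives $|\mathcal{T}_{[V]}(\vct{x},\vct{y})|\leq \exp[n H(V|\hat{P}_{XY})]$ in the numerator and $|\mathcal{T}^n_{[U|X]_{\epsilon''}}(\vct{x})|\geq (n+1)^{-|\mathcal{U}||\mathcal{X}|}\exp[n(H(U|X)-\eta(\epsilon''))]$ with $\eta(\epsilon'')\to 0$ in the denominator, with Lemma~\ref{Lemma:TypeCounting}, which bounds the number of candidate $V$ by a polynomial in $n$, reduces the problem to exhibiting a strictly positive exponent
\begin{equation*}
H(U|X)-H(V|\hat{P}_{XY}) \geq \log c > 0
\end{equation*}
for every bad conditional type $V$.

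The last inequality follows because conditioning on $Y$ in addition to $X$ cannot increase entropy, so $H(V|\hat{P}_{XY})\leq H(V'|\hat{P}_X)$, where $V'$ is the $(U|X)$-marginalization of $V$; moreover equality holds only when $V(u|x,y)$ does not depend on $y$, i.e., $V(\cdot|x,y)=P_{U|X}(\cdot|x)$ (using that $V'$ is close to $P_{U|X}$ by the projection constraint and that $\hat{P}_{XY}\approx P_{XY}$). Since $V$ is by assumption $\epsilon$-far from this Markov-compatible point, continuity of conditional entropy together with compactness of the simplex furnishes a strictly positive gap, which a Pinsker-type inequality then converts into a quantitative rate $\log c = \log c(\epsilon)>0$ for all $\epsilon''$ small enough.

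The main obstacle is obtaining this positive gap \emph{uniformly} over $(\vct{x},\vct{y})\in\mathcal{T}^n_{[XY]_{\epsilon'}}$, rather than at the single nominal measure $P_{XY}$; this forces one to treat the empirical measure $\hat{P}_{XY}$ as an unknown parameter close to (but not equal to) $P_{XY}$ and to calibrate $\epsilon''$ as a function of $\epsilon$, $\epsilon'$ and the alphabet sizes so that the projection-compatibility constraint pins $V$ inside a neighborhood where the entropy gap is controlled by Pinsker. This calibration is precisely the content of the ``sufficiently small $\epsilon,\epsilon',\epsilon''>0$'' hypothesis, and getting the constants right—so that the polynomial prefactors from Lemmas~\ref{Lemma:TypeCounting}--\ref{Lemma:SizeTypicalSet} are absorbed into the exponential decay—is the most delicate bookkeeping step of the proof.
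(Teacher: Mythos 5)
The paper does not actually prove this lemma: its ``proof'' is a one-line pointer to the cited reference \cite{Piantanida-Vega-Hero-2014}, so there is no in-paper argument to measure you against. Judged on its own merits, your argument is the classical method-of-types proof of the Markov lemma (going back to Berger's original argument) and it is sound for the statement as formulated here; the key enabling hypothesis is \eqref{eq:u_dist}, which makes $U^n$ exactly uniform over $\mathcal{T}^n_{[U|X]_{\epsilon''}}(\vct{x})$, so that the probability of any bad conditional type class is literally a ratio of cardinalities controlled by Lemmas~\ref{Lemma:TypeCounting} and~\ref{Lemma:SizeTypicalSet}. (The more elaborate machinery in the cited reference is needed precisely when the codeword distribution is \emph{not} uniform over a conditional type class, e.g.\ for correlated codebooks; here that generality is not required.) Your identification of the exponent gap is correct, and it is worth making it quantitative via the identity
\begin{equation*}
H(V'|\hat{P}_X) - H(V|\hat{P}_{XY}) = I_{Q}(U;Y|X)\ ,\qquad Q \coloneqq \hat{P}_{XY}V\ ,
\end{equation*}
where $V'$ is the $y$-average of $V$: Pinsker's inequality bounds $I_Q(U;Y|X)$ from below by a constant times $\|Q-\hat{P}_{XY}V'\|_1^2$, and since $\hat{P}_{XY}V'$ lies within $\mathcal{O}(\epsilon'+\epsilon'')$ of $P_{UXY}$ while a bad $Q$ is at distance at least $\epsilon$ from it, the triangle inequality yields a gap of order $\epsilon^2$. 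The one place that genuinely requires care --- and which you correctly flag rather than gloss over --- is that this $\epsilon^2$ gap must dominate both the $\eta(\epsilon'')$ slack in the lower bound on $|\mathcal{T}^n_{[U|X]_{\epsilon''}}(\vct{x})|$ and the continuity error $H(U|X)-H(V'|\hat{P}_X)$, which is what forces $\epsilon',\epsilon''$ to be calibrated as functions of $\epsilon$; with that calibration the polynomial prefactors are absorbed and the claimed $\mathcal{O}(c^{-n})$ decay, uniform over $(\vct{x},\vct{y})\in\mathcal{T}^n_{[XY]_{\epsilon'}}$, follows. In short: correct approach, no missing idea, only routine bookkeeping left.
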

     	
     	\begin{proof}
     		Refer to reference \cite{Piantanida-Vega-Hero-2014}.
     	\end{proof}
	
	\begin{lemma}\label{Lemma:SizeDeltaTypicalSet}
For every probability measure $P_X \in \cP(\mathcal{X})$ and stochastic mapping $W:\mathcal{X}\mapsto \mathcal{P}(\mathcal{Y})$, there exist sequences $(\varepsilon_n)_{n \in \mathbb{N}_+},(\varepsilon^\prime_n)_{n \in \mathbb{N}_+} \to 0$ as  $n \to \infty$ satisfying:
		\begin{equation}
		 \left|\frac{1}{n} \log |\cT_{[X]_\delta}| - H(X)\right| \leq \varepsilon_n \ , \  \left|\frac{1}{n} \log |\cT_{[Y|X]_\delta}(\vct{x})| - H(Y|X)\right| \leq \varepsilon_n \ , 
\end{equation}
		 for each $\vct{x} \in \cT_{[X]_\delta}$ where $\varepsilon_n\equiv \mathcal{O}(n^{-1} \log n) $, and 
		\begin{equation}
		P_X^n\big(\cT_{[X]_\delta}\big) \geq 1- \varepsilon^\prime_n \ ,\ W^n\big(\cT_{[Y|X]_\delta}(\vct{x})| X^n=\vct{x}\big) \geq 1 - \varepsilon^\prime_n \ ,
		\end{equation}
		for all $\vct{x} \in \mathcal{X}^n$ where $\varepsilon_n^\prime\equiv \mathcal{O}\left(\frac{1}{n\delta^2}\right) $, provided that   $n$ is sufficiently large.
\end{lemma}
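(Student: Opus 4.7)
The plan is to split the lemma into two independent pieces, treated by the standard method-of-types machinery already recalled in the appendix: (i) the size estimates on $\cT_{[X]_\delta}$ and $\cT_{[Y|X]_\delta}(\vct{x})$, handled by a type-counting/continuity argument, and (ii) the probability estimates, handled by Chebyshev's inequality and a union bound. The hypotheses $\hat{P}_X \ll P_X$ and $\hat{P}_{Y|X}\ll W$ built into Definition~\ref{Def:DeltaTypicality} make sure everything stays inside the relevant support so that entropies are continuous.

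For the size bounds I would write
\[
\cT_{[X]_\delta} \;=\; \bigcup_{\hat{P}\in\mathcal{P}_n(\mathcal{X})\,:\,\|\hat{P}-P_X\|\le c\delta,\,\hat{P}\ll P_X} \cT_{[\hat{P}]}
\]
and apply Lemma~\ref{Lemma:SizeTypicalSet} to each type class, together with Lemma~\ref{Lemma:TypeCounting} to control the number of type classes by $(n+1)^{|\mathcal{X}|}$. By uniform continuity of $H(\cdot)$ on the subsimplex $\{\hat{P}:\hat{P}\ll P_X\}$, $|H(\hat{P})-H(X)|\le\eta(\delta)$ whenever $\|\hat{P}-P_X\|\le c\delta$, so one gets
\[
\tfrac{1}{n}\log|\cT_{[X]_\delta}| \;\le\; H(X)+\eta(\delta)+\tfrac{|\mathcal{X}|}{n}\log(n+1),
\]
which accounts for the $\mathcal{O}(n^{-1}\log n)$ term. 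The matching lower bound comes from restricting to the single type $\hat{P}^\star$ in $\mathcal{P}_n(\mathcal{X})$ closest to $P_X$ (which is at distance $\le|\mathcal{X}|/n$ from $P_X$ and thus lies in the typical union for $n$ large), together with the lower bound in Lemma~\ref{Lemma:SizeTypicalSet}, giving $|\cT_{[X]_\delta}|\ge (n+1)^{-|\mathcal{X}|}\exp[nH(\hat{P}^\star)]$. The conditional statement is identical: conditional types $V:\mathcal{X}\mapsto \mathcal{P}_n(\mathcal{Y})$ close to $W$ are enumerated via Lemma~\ref{Lemma:TypeCounting} applied on $\mathcal{X}\times\mathcal{Y}$, and Lemma~\ref{Lemma:SizeTypicalSet} gives $|\cT_{[V]}(\vct{x})|\doteq\exp[nH(V|\hat{P}_{\vct x})]$; uniform continuity of $H(V|P)$ jointly in $(V,P)$ over the relevant compact set then yields the same $\mathcal{O}(n^{-1}\log n)$ error uniformly in $\vct{x}\in\cT_{[X]_\delta}$.

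For the probability bounds I would use second-moment concentration. Writing $N(a|X^n)=\sum_{i=1}^n \mathbb{1}\{X_i=a\}$ as a sum of i.i.d.\ Bernoulli$(P_X(a))$ random variables, $\mathbb{E}[N(a|X^n)]=nP_X(a)$ and $\mathrm{Var}(N(a|X^n))\le n/4$, so Chebyshev gives $\Pr[|N(a|X^n)-nP_X(a)|>n\delta]\le \tfrac{1}{4n\delta^2}$. A union bound over $a\in\mathcal{X}$ yields
\[
P_X^n\bigl(\cT_{[X]_\delta}^c\bigr)\;\le\;\frac{|\mathcal{X}|}{4n\delta^2}\;\equiv\;\varepsilon'_n,
\]
matching the advertised $\mathcal{O}(1/(n\delta^2))$ rate (the $\hat{P}_X\ll P_X$ condition is automatic once all counts $N(a|X^n)$ with $P_X(a)=0$ are zero, an event of probability one). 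For the conditional case, fix $\vct{x}\in\mathcal{X}^n$; conditional on $X^n=\vct{x}$, the random variables $N(a,b|\vct{x},Y^n)$ for fixed $a$ are sums of $N(a|\vct{x})$ independent Bernoulli$(W(b|a))$ variables, so Chebyshev plus a union bound over $(a,b)\in\mathcal{X}\times\mathcal{Y}$ yields $W^n(\cT_{[Y|X]_\delta}(\vct{x})^c\mid X^n=\vct{x})\le |\mathcal{X}||\mathcal{Y}|/(4n\delta^2)$, uniformly in $\vct{x}$.

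There is no real obstacle here: every ingredient (type counting, type-class sizes, uniform continuity of entropy on a compact simplex, Chebyshev, union bound) is already available, and the proof is essentially bookkeeping to track that the constants absorbed into $\varepsilon_n$ and $\varepsilon'_n$ have the claimed orders $\mathcal{O}(n^{-1}\log n)$ and $\mathcal{O}(1/(n\delta^2))$ respectively. The only point requiring slight care is making the entropy-continuity modulus $\eta(\delta)$ explicit enough (e.g.\ via the standard bound $|H(P)-H(Q)|\le\|P-Q\|_1\log\tfrac{|\mathcal{X}|}{\|P-Q\|_1}$) so that one sees it as a function of $\delta$ only and not of $n$; this is what lets $\varepsilon_n$ collapse to $\mathcal{O}(n^{-1}\log n)$ once $\delta$ is fixed.
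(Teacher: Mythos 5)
Your argument is correct and is essentially the standard proof of the result the paper simply cites (Csisz\'ar--K\"orner, Lemma 2.13): the paper gives no proof of its own beyond that reference, and your combination of type counting, type-class size bounds, continuity of entropy, and Chebyshev plus a union bound is exactly the machinery behind it. The one caveat is the point you already flag but do not fully resolve: the two-sided size estimate your argument actually yields is $\eta(\delta)+\mathcal{O}(n^{-1}\log n)$ rather than $\mathcal{O}(n^{-1}\log n)$, and for fixed $\delta>0$ the $\eta(\delta)$ term genuinely does not vanish (the union of type classes within distance $\delta$ of $P_X$ is exponentially larger than $\exp[nH(X)]$ whenever some nearby type has strictly larger entropy), so the advertised rate is only literally attained under the delta-convention $\delta=\delta_n\to 0$ with $n\delta_n^2\to\infty$ (so that $\varepsilon'_n\to 0$ as well) --- which is the regime in which the paper actually invokes the lemma.
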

	\begin{proof}
Refer to reference~\cite[Lemma 2.13]{Csiszar-Korner-2011}.
\end{proof}

\section{}\label{Apen:MarkovChains}
As a part of the weak unfeasibility  part of the proof of Theorem~\ref{Theo:AgainstIndependenceOptimality}, two Markov chains are necessary:
\begin{equation}
\begin{cases}
\hat{U}_i \mkv X_i \mkv Y_i \ , \ \forall\, i=[1:n]\\
V_i \mkv (\hat{U}_i,Y_i) \mkv X_i \ , \ \forall\, i=[1:n].
\end{cases}
\end{equation}
Using the chosen RVs from~\eqref{eq:ChosenRandomVariables}, these Markov chains are represented by
\begin{equation}\label{eq:DesiredMarkovChains}
\begin{cases}
(I_A, X^{i-1}, Y_{i+1}^n) \mkv X_i \mkv Y_i \ ,\ \forall\, i=[1:n]\\
I_B \mkv (I_A, X^{i-1}, Y_{i}^n) \mkv X_i  \ , \ \forall\, i=[1:n].
\end{cases}
\end{equation}
In order to check this, we use the next lemma.

\begin{lemma}\label{Lemma:Kaspi}
	Let $A_1,A_2,B_1,B_2$ be RVs with joint probability measure $P_{A_1A_2B_1B_2}= P_{A_1B_1}P_{A_2B_2}$ and assume  that $\{f^i\}_{i=1}^k,\{g^i\}_{i=1}^k$ are any collection of $P$-measurable mappings with domain structure given by: 
	\begin{align}
	& f^1(A_1,A_2); f^2(A_1,A_2,g^1); \ldots; f^k(A_1,A_2,g^1,\ldots,g^{k-1}) \ , \\
	& g^1(B_1,B_2,f^1); g^2(B_1,B_2,f^1,f^2); \ldots ; g^k(B_1,B_2,f^1,\ldots, f^k) \ .
	\end{align}
	Then,
	\begin{equation}
	I(A_2;B_1|f^1,f^2,\ldots,f^k,g^1,g^2,\ldots,g^k,A_1,B_2) = 0 \ .
	\end{equation}
\end{lemma}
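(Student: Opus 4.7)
The plan is to prove the statement by induction on the number of messages exchanged, maintaining the invariant
\[
I\bigl(A_2; B_1 \,\big|\, A_1, B_2, M\bigr) = 0,
\]
where $M$ denotes the (partial) transcript produced up to that point. The base case ($M$ empty) is immediate from the factorization $P_{A_1 A_2 B_1 B_2} = P_{A_1 B_1} P_{A_2 B_2}$, which implies that $(A_1, B_1)$ is independent of $(A_2, B_2)$, and in particular $I(A_2; B_1 \mid A_1, B_2) = 0$.

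The key auxiliary fact driving the inductive step is the following elementary lemma: if $X$ and $Y$ are conditionally independent given $Z$, and $W$ is a deterministic function of $(X, Z)$, then $X$ and $Y$ remain conditionally independent given $(Z, W)$. The justification is a short computation: $P(Y \mid X, Z, W) = P(Y \mid X, Z) = P(Y \mid Z)$, the first equality because $W$ is determined by $(X, Z)$, the second by the hypothesis; marginalizing over $X$ then shows that $P(Y \mid Z, W) = P(Y \mid Z)$ as well. The analogous conclusion holds by symmetry when $W$ is instead a function of $(Y, Z)$.

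Applying this in the inductive step is routine. Suppose the invariant holds for some transcript $M$. If the next message is of the form $f^i(A_1, A_2, g^1, \ldots, g^{i-1})$, then it is a deterministic function of $(A_2, Z)$ with $Z := (A_1, B_2, M)$; the auxiliary lemma applied with $X = A_2$, $Y = B_1$, $W = f^i$ then shows that the invariant persists after $f^i$ is appended to the transcript. If instead the next message is $g^i(B_1, B_2, f^1, \ldots, f^i)$, then it is a function of $(B_1, Z)$, and the symmetric version of the auxiliary lemma (with the roles of the two sides swapped) yields the same conclusion. Iterating through all $2k$ rounds produces the claimed identity.

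There is no deep obstacle in this argument. The only thing requiring care is the bookkeeping at each round: one must correctly identify which of $A_2$ or $B_1$ the newly appended message is (conditionally) a function of, so that the appropriate orientation of the auxiliary lemma is applied. The product structure $P_{A_1 A_2 B_1 B_2} = P_{A_1 B_1} P_{A_2 B_2}$ is used only once, to establish the base case; the induction then propagates the conditional independence through the entire interactive protocol without any further distributional input.
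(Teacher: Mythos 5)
Your proposal is correct. The paper itself gives no proof of this lemma---it simply defers to \cite[Lemma~1]{Kaspi-1985}---and your induction on the transcript, using the elementary fact that conditional independence of $X$ and $Y$ given $Z$ survives additional conditioning on a deterministic function of $(X,Z)$ (or of $(Y,Z)$), is precisely the standard argument underlying that cited result, with the product measure used only to seed the base case $I(A_2;B_1\,|\,A_1,B_2)=0$.
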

\begin{proof}
Refer to reference~\cite[Lemma 1]{Kaspi-1985}.
\end{proof}

In order to prove the first Markov chain, we simply let: 
	\begin{equation}
	\begin{cases}
	A_1 \coloneqq  X_i, \quad B_1  \coloneqq Y_i \ ,\\
	A_2  \coloneqq (X^{i-1},X_{i+1}^n,Y_{i+1}^n)\ , \quad B_2  \coloneqq Y^{i-1} \ .
	\end{cases}
	\end{equation}
	It can be easily verified that $P_{A_1A_2B_1B_2}= P_{A_1B_1}P_{A_2B_2}$, which stems directly from the i.i.d. nature of the samples. Thus, according to Lemma~\ref{Lemma:Kaspi}:
	\begin{equation}
	\begin{aligned}
	0 &= I(X^{i-1}X_{i+1}^nY_{i+1}^n;Y_i|X_iY^{i-1})\\
	&= I(X^{i-1}X_{i+1}^nY^{i-1}Y_{i+1}^n;Y_i|X_i) - I(Y^{i-1};Y_i|X_i) \ ,
	\end{aligned}
	\end{equation}
which shows   the Markov chain:
\begin{equation}
(X^{i-1},X_{i+1}^n,Y^{i-1},Y_{i+1}^n) \mkv X_i \mkv Y_i \ , \ \forall\, i=[1:n].
\end{equation}
As $I_A \coloneqq f_{[1]}(X^n)$, the following Markov chain is also true:
\begin{equation}
(I_A,X^{i-1},Y_{i+1}^n) \mkv X_i \mkv Y_i  \ , \ \forall\, i=[1:n]
\end{equation}
which proves the first Markov chain in~\eqref{eq:DesiredMarkovChains}. As for the second one, we let:
\begin{equation}
\begin{cases}
A_1 \coloneqq X^{i-1}\ , \quad B_1 \coloneqq Y^{i-1} \ ,\\
A_2 \coloneqq (X_i,X_{i+1}^n)\ , \quad B_2 \coloneqq (Y_i,Y_{i+1}^n) \ .
\end{cases}
\end{equation}
Under this choice, $I_A \coloneqq  f_{[1]}(A_1,A_2)$ and thus,
\begin{equation}
I(X_iX_{i+1}^n;Y^{i-1}|I_AX^{i-1}Y_iY_{i+1}^n) = 0 \ , \ \forall\, i=[1:n].
\end{equation}
The later identity proves the following Markov chain:
\begin{equation}
(X_i,X_{i+1}^n) \mkv (I_A,X^{i-1},Y_i,Y_{i+1}^n) \mkv Y^{i-1} \ , \ \forall\, i=[1:n].
\end{equation}
As $I_B \coloneqq  g_{[1]}(I_A,Y^n)$, it also holds that:
\begin{equation}
X_i \mkv (I_A,X^{i-1},Y_{i}^n) \mkv I_B \ , \ \forall\, i=[1:n]
\end{equation}
which yields the desired Markov chain.



	\section{}\label{Apen:WeakConverse}
	
	\begin{proof}[Proof of Lemma~\ref{Prop:WeakConverse}]
For block-length $n$, given a code  characterized by the encoding mappings $f_{[1]},g_{[1]}$ at nodes $A$ and $B$ respectively, and a decoding mapping $\phi$ at node $A$. Let the acceptance region be denoted by 
\begin{equation}
\mathcal{A}_n \coloneqq \big \{(\vct{x},j) \in \cX^n\times\{1,\dots, |g_{[1]}|\}: g_{[1]}\big(\vct{y},f_{[1]}(\vct{x})\big)  = j, \ \vct{y} \in \cY^n ,\ \phi\big(\vct{x},j\big)= 0\big\}\ .  \ 
\end{equation}
Let $P$ and $Q$ denote the probabilities measures on $\cX^n\times\{1,\dots, |g_{[1]}|\}$ induced by  $H_0$ and $H_1$, respectively. From the \emph{log-sum inequality}~\cite{Csiszar-Korner-2011}, we have:
		\begin{align}
		&\mathcal{D}\left(P_{X^nI_AI_B}\|Q_{X^nI_AI_B}\right)=\mathcal{D}\left(P_{X^nI_B}\|Q_{X^nI_B}\right) \\
		&\geq (1-\alpha_n)\log\frac{1-\alpha_n}{\beta_n(R,\epsilon\,|K=1)} + \alpha_n\log\frac{\alpha_n}{1-\beta_n(R,\epsilon\,|K=1)} \ ,\nonumber
		\end{align}
		where $I_A \coloneqq  f_{[1]}(X^n)$, $I_B \coloneqq g_{[1]}(I_A,Y^n)$, $\alpha_n(R|K=1) \coloneqq  P(\mathcal{A}_n^c)\leq \epsilon$ and $\beta_n(R,\epsilon\,|K=1) \coloneqq  Q(\mathcal{A}_n)$. Through some algebra this yields:
		\begin{equation}
		\begin{aligned}
		\mathcal{D}\left(P_{X^nI_AI_B}\|Q_{X^nI_AI_B}\right) \geq (1-\alpha_n) \log\frac{1}{\beta_n(R,\epsilon\,|K=1)} -h_2(\alpha_n) \ ,
		\end{aligned}
		\end{equation}
		where $h_2(p)\coloneqq -p\log p - (1-p) \log (1-p)$ is the \emph{binary entropy} function. By assumption $\epsilon \to 0$ as $n \to \infty$, one conclude that for $n$ large enough
		\begin{equation}
		-\frac1n \log \beta_n(R,\epsilon\,|K=1)\leq  \frac1n \mathcal{D}\left(P_{X^nI_AI_B}\|Q_{X^nI_AI_B}\right) - \delta_n \ ,
		\end{equation}
		with $\delta_n \to 0$ as $n\to \infty$. Using the chain rule, we continue to get:
		\begin{align}
		\mathcal{D}\big(P_{X^nI_AI_B} \|Q_{X^nI_AI_B}\big)& \overset{(\alph{InEquations}\stepcounter{InEquations})}{=} I(I_B;X^n|I_A) + \mathcal{D}\left(P_{I_B|I_A}\|Q_{I_B|I_A}|P_{I_A}\right)\nonumber\\
		& \overset{(\alph{InEquations}\stepcounter{InEquations})}{\leq}I(I_B;X^n|I_A) + \mathcal{D}\left(P_{Y^nI_AI_B}\|Q_{Y^nI_AI_B}\right)\nonumber\\
		&\overset{(\alph{InEquations}\stepcounter{InEquations})}{=} I(I_B;X^n|I_A) + \mathcal{D}\left(P_{Y^nI_A}\|P_{Y}^n P_{I_A}\right)\nonumber\\
		&= I(I_B;X^n|I_A) + I(I_A;Y^n) \ .\nonumber
		\end{align}
		Here, $(\alph{OutEquations}\stepcounter{OutEquations})$ and $(\alph{OutEquations}\stepcounter{OutEquations})$ stem from the chain rule for the KL-divergence, and $(\alph{OutEquations}\stepcounter{OutEquations})$ is due to the fact that we consider the case of testing against independence. With this, the \emph{weak unfeasibility} proof  is completed. 
	\end{proof}

\section{}\label{Apen:ZeroRate}
\begin{proof}[Complementary proof of Theorem~\ref{Prop:TestingWithZeroRate}]
We now complete  the proof of the strong unfeasibility to Theorem~\ref{Prop:TestingWithZeroRate}. To this end, we recall that we showed there exist sets $\cC\subset \cX^n$ and $\cF\subset \cY^n$ such that $\cC\times\cF \in \cA_n$, and $P^n_{XY}(\cC\times\cF) \geq \exp(-n\delta_n)$, with $\delta_n \to 0$ as $n \to \infty$. We now evoke the ``Blowing-Up'' Lemma:
\begin{lemma}[Blowing-up Lemma]\label{Lemma:BlowingUp}
Let $Y^n=(Y_1,\dots,Y_n)$ be independent random variables in $(\cY^n,\mathcal{B}_{\cY^n})$ distributed according to $W^n(Y^n|X^n=\vct{x})$ for some fixed vector $\vct{x}\in \cX^n$ and a stochastic mapping $W:\mathcal{X}\mapsto \mathcal{P}(\mathcal{Y})$ and let  $\delta_n \to 0$ be a given sequence. There exist sequences  ${k_n}\equiv o(n)$ and $\gamma_n\equiv o(1)$,  such that for every subset $\mathcal{A}_n \subset \cY^n$: 
\begin{equation}
W^n(\mathcal{A}_n|X^n=\vct{x}) \geq \exp(-n\delta_n)\ \textrm{implies} \ W^n\big(\Gamma^{k_n}\mathcal{A}_n|X^n=\vct{x}\big) \geq 1 - \gamma_n
\end{equation}
where $\Gamma^{k_n}\mathcal{A}_n$ denotes the $\Gamma^{k_n}$-neighborhood of the set $\mathcal{A}_n$ defined by
	\begin{equation}
	\Gamma^{k_n}\mathcal{A}_n \coloneqq  \left\{\hat{\vct{y}} \in \cY^n: \, \min\limits_{\vct{y} \in \mathcal{A}_n}\rho_n(\hat{\vct{y}},\vct{y}) \leq k_n\right\} \ ,
	\end{equation}
	where $\rho_n(\hat{\vct{y}},\vct{y}) \coloneqq  \sum\limits_{i=1}^n \mathbb{1}\{\hat{y}_i\neq y_i\}$ and $\mathbb{1}\{\hat{y}\neq y\}=1 $ if $\hat{y}\neq y$ or $=0$ otherwise. 
\end{lemma}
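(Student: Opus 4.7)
The plan is to derive the Blowing-up Lemma from a Marton-type transportation-cost inequality, which quantitatively controls the Wasserstein-$1$ distance (under the Hamming metric) by the Kullback-Leibler divergence on the product space $\cY^n$. Writing $P \coloneqq W^n(\cdot\,|\,X^n=\vct{x})=\bigotimes_{i=1}^n W(\cdot|x_i)$, the key tool is Marton's inequality: for any probability measure $Q$ on $\cY^n$ with $Q \ll P$, there exists a coupling $\pi$ of $P$ and $Q$ such that
\[
\mathbb{E}_\pi[\rho_n(Y,Y')] \leq \sqrt{\tfrac{n}{2}\,\cD(Q\|P)}\ .
\]
This is established by induction on $n$, decomposing the divergence via the chain rule and combining Pinsker's inequality applied coordinatewise with a Cauchy-Schwarz step to aggregate the per-coordinate transportation costs.

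Armed with this, the next step is to apply the inequality to the conditioned measure $Q\coloneqq P(\cdot\,|\,\mathcal{A}_n)$, which is supported on $\mathcal{A}_n$. A direct computation gives $\cD(Q\|P)=-\log P(\mathcal{A}_n)\leq n\delta_n$, by the hypothesis $P(\mathcal{A}_n)\geq\exp(-n\delta_n)$. Marton's inequality then produces a coupling $(Y,Y')$ with $Y\sim P$, $Y'\sim Q$ (hence $Y'\in\mathcal{A}_n$ almost surely), and $\mathbb{E}[\rho_n(Y,Y')]\leq n\sqrt{\delta_n/2}$. In particular, the distance $D(Y)\coloneqq\min_{\vct{y}\in\mathcal{A}_n}\rho_n(Y,\vct{y})$ satisfies $D(Y)\leq\rho_n(Y,Y')$ almost surely, so $\mathbb{E}[D(Y)]\leq n\sqrt{\delta_n/2}$.

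It remains to convert this expectation bound into a uniform concentration statement. Choose $k_n\coloneqq\lceil n\,\delta_n^{1/4}\rceil$, so that $k_n/n\to 0$, i.e., $k_n\equiv o(n)$. By Markov's inequality,
\[
P\{D(Y)>k_n\}\leq\frac{\mathbb{E}[D(Y)]}{k_n}\leq\frac{n\sqrt{\delta_n/2}}{n\,\delta_n^{1/4}}=\frac{\delta_n^{1/4}}{\sqrt{2}}\eqqcolon\gamma_n\ ,
\]
and $\gamma_n\to 0$. Since $\{Y:D(Y)\leq k_n\}=\Gamma^{k_n}\mathcal{A}_n$, we conclude $W^n(\Gamma^{k_n}\mathcal{A}_n|X^n=\vct{x})\geq 1-\gamma_n$, proving the lemma. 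The main obstacle is Marton's transportation inequality itself: the tensorization argument requires a careful disintegration of $Q$ given each coordinate, iterated application of Pinsker's inequality, and a Cauchy-Schwarz step to aggregate coordinatewise transportation costs into the global bound $\sqrt{(n/2)\cD(Q\|P)}$. If one wishes to avoid Marton's inequality, one may instead invoke the older combinatorial proof via type-class enumeration and Hamming isoperimetry due to Ahlswede, G\'acs and K\"orner, which is longer but relies only on counting arguments.
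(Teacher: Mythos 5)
Your proof is correct, and it takes a genuinely different route from the one the paper points to: the paper gives no proof at all for this lemma, citing instead the original combinatorial/isoperimetric arguments of Margulis and of Ahlswede--G\'acs--K\"orner, which proceed by Hamming-ball isoperimetry and iterated one-step expansion estimates. Your derivation from Marton's transportation-cost inequality is the now-standard modern shortcut: the identity $\cD\bigl(P(\cdot|\mathcal{A}_n)\,\big\|\,P\bigr)=-\log P(\mathcal{A}_n)$, the coupling bound $\mathbb{E}[D(Y)]\leq n\sqrt{\delta_n/2}$, and the Markov step with $k_n=\lceil n\delta_n^{1/4}\rceil$ are all sound, and crucially your $k_n,\gamma_n$ depend only on $n$ and $\delta_n$ (not on $\mathcal{A}_n$, $\vct{x}$, or $W$), so the required uniformity holds --- indeed a stronger uniformity than the paper needs, since it later remarks that the sequences depend only on $|\cX|,|\cY|,\delta_n$. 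What each approach buys: Marton's route is short, quantitative, and self-contained modulo the tensorization of Pinsker's inequality, and a slightly more careful version (applying the transportation inequality also to the complement of $\Gamma^{k_n}\mathcal{A}_n$ and using the triangle inequality for $\bar d$) would even yield Gaussian-type decay of $\gamma_n$ rather than the polynomial rate your Markov step gives; the combinatorial route avoids the transportation machinery entirely but is longer. Two cosmetic points: the paper's logarithms are base $2$, so a factor $\ln 2$ appears in the constant of $\gamma_n$ (immaterial to the $o(1)$ conclusion), and the degenerate case $\delta_n=0$ should be dispatched separately since then $k_n=0$ and the Markov quotient is formally undefined (though $\mathbb{E}[D(Y)]=0$ forces $D(Y)=0$ a.s., so the conclusion still holds).
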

\begin{proof}
Refer to references~\cite{Margulis74,Ahlswede-Gacs-Korner}.
\end{proof}
The rest of the proof follows closely the steps taken in \cite{Shalaby-Papamarcou-1992}.
As $P^n_{XY}(\cC\times\cF) \geq \exp(-n\delta_n)$, clearly $P_X^n(\cC) \geq \exp(-n\delta_n)$ and $P^n_Y(\cF) \geq \exp(-n\delta_n)$. Using the non-conditional version of Lemma~\ref{Lemma:BlowingUp}, there exist sequences $k_n=o(n)$ and $\gamma_n=o(1)$  s.t.:
\begin{equation}\label{eq:AfterBlowingUp}
P_X^n\big(\Gamma^{k_n}\cC\big) \geq 1 - \gamma_n\ , \ \quad P_Y^n\big(\Gamma^{k_n}\cF\big) \geq 1 - \gamma_n \ ,
\end{equation}
where $k_n,\gamma_n$ only depend on $|\cX|,|\cY|$ and $\delta_n$, but not on $P_{XY}$. Equation~\eqref{eq:AfterBlowingUp} holds true if we change $P_X$ to $P_{\tX}$ and $P_Y$ to $P_{\tY}$, for some $\tX\tY \in \mathscr{L}_0$. As we wish to analyze the error probability  for fixed $n$,  during most of this proof we take the liberty to dismiss the subscript $n$ from $k_n$, for the sake of readability.

Using the fact $\Pr(A \cap B) \geq \Pr(A) + \Pr(B) - 1$ and~\eqref{eq:AfterBlowingUp}, we obtain:
\begin{equation}\label{eq:Combine1}
P_{\tX\tY}^n\big(\Gamma^k\cC\times\Gamma^k\cF\big) \geq P_{\tX}^n\big(\Gamma^k\cC\big) + P_{\tY}^n\big(\Gamma^k\cF\big) - 1 \geq 1 - 2\gamma_n \ .
\end{equation}
Consider  the set of  $\eta$-typical sequences defined by $P_{\tX\tY}$. By Lemma~\ref{Lemma:SizeDeltaTypicalSet},
\begin{equation}\label{eq:Combine2}
P_{\tX\tY}^n(\cT_{[\tX\tY]\eta}) \geq 1 - \mathcal{O}\left(\frac{1}{n\eta^2}\right) = 1 - \mathcal{O}\left(n^{-\frac{1}{3}}\right) \ ,
\end{equation}
where the last equality is a result of the choice $\eta \equiv  \eta_n \coloneqq n^{-\frac{1}{3}}$. Combining~\eqref{eq:Combine1} and \eqref{eq:Combine2}, it is clear that for sufficiently large $n$,
\begin{equation}\label{eq:CapBiggerThanHalf}
P_{\tX\tY}^n\big(\Gamma^k\cC\times\Gamma^k\cF) \cap \cT_{[\tX\tY]_\eta}\big) \geq \frac{1}{2} \ .
\end{equation}
By the definition of the $\eta$-typical set (see Definition~\ref{Def:DeltaTypicality}), we have:
\begin{equation}
\cT_{[\tX\tY]_\eta} = \!\!\!\!\!\!\!\bigcup\limits_{\substack{P_{\hat{X}\hat{Y}} \in \mathcal{P}_n(\cX\times\cY)\\ |P_{\hat{X}\hat{Y}}-P_{\tX\tY}| \leq \eta\ , \ P_{\hat{X}\hat{Y}} \ll P_{\tX\tY}}} \!\!\!\!\!\!\! \cT_{[\hat{X}\hat{Y}]} \ ,
\end{equation} 
where $|P_{\hat{X}\hat{Y}}-P_{\tX\tY}| \leq \eta$ refers to the maximum over all the arguments in $\cX\times\cY$. As all elements of $\cT_{[\hat{X}\hat{Y}]}$ are equiprobable under an i.i.d measure, \eqref{eq:CapBiggerThanHalf} can be rewritten as
\begin{equation}\label{eq:SumBiggerThanHalf}
\sum\limits_{\substack{P_{\hat{X}\hat{Y}} \in \mathcal{P}_n(\cX\times\cY)\\ |P_{\hat{X}\hat{Y}}-P_{\tX\tY}| \leq \eta\ , \ P_{\hat{X}\hat{Y}} \ll P_{\tX\tY}}} P_{\tX\tY}^n\big(\cT_{[\hat{X}\hat{Y}]}\big) \frac{|(\Gamma^k\cC\times\Gamma^k\cF) \cap \cT_{[\hat{X}\hat{Y}]_\eta}|} {|\cT_{[\hat{X}\hat{Y}]_\eta}|} \geq \frac{1}{2} \ .
\end{equation}
As $P_{\tX\tY}^n(\cT_{[\hat{X}\hat{Y}]}) \leq 1$, by using the bound over the size of the set $\cP_n(\cX\times\cY)$ in Lemma~\ref{Lemma:TypeCounting}, there must be \emph{at least one type} $\cT_{[\hat{X}\hat{Y}]}$, for which
\begin{equation}\label{eq:FracBiggerThanHalf}
\frac{|(\Gamma^k\cC\times\Gamma^k\cF) \cap \cT_{[\hat{X}\hat{Y}]_\eta}|} {|\cT_{[\hat{X}\hat{Y}]_\eta}|} \geq \frac{1}{2}(n+1)^{-|\cX|\cY|} = \frac{1}{2}\exp(-n\epsilon_n) \ ,
\end{equation}
with $\epsilon_n = \mathcal{O}(n^{-1}\log(n+1)) \to 0$ as $n \to \infty$.
The equiprobability property is also true for the probability measure implied by $H_1$, that is $P_{\bX\bY}$. Thus, 
\begin{equation}\label{eq:SameExponentH1}
\begin{aligned}
P_{\bX\bY}^n\big(\Gamma^k\cC\times\Gamma^k\cF\big) &\geq P_{\bX\bY}^n\big(\Gamma^k\cC\times\Gamma^k\cF\big) \cap \cT_{\hat{X}\hat{Y}})\\ &= P^n_{\bX\bY}(\cT_{\hat{X}\hat{Y}})\frac{|(\Gamma^k\cC\times\Gamma^k\cF) \cap \cT_{\hat{X}\hat{Y}}|}{|\cT_{\hat{X}\hat{Y}}|}\\ &\geq \frac{1}{2}\exp(-n\epsilon_n)P^n_{\bX\bY}(\cT_{\hat{X}\hat{Y}}) \ ,
\end{aligned}
\end{equation}
where the final inequality stems from \eqref{eq:FracBiggerThanHalf}.

Consider now an arbitrary element $(\vct{u},\vct{v}) \in \Gamma^k\cC\times\Gamma^k\cF$. By definition, there exist an element $(\vct{x},\vct{y}) \in \cC\times\cF$, such that $(u_i,v_i) \neq (x_i,y_i)$ at most in $2k$ locations. Thus,
\begin{equation}
P_{\bX\bY}^n(\vct{u},\vct{v}) = \prod\limits_{i=1}^n P_{\bX\bY}(u_i,v_i) \leq \rho^{-2k} \prod\limits_{i=1}^nP_{\bX\bY}(x_i,y_i) = \rho^{-2k}P^n_{\bX\bY}(\vct{x},\vct{y}) \ ,
\end{equation}
with $\rho = \min\limits_{(x,y) \in \cX\times\cY} P_{\bX\bY}(x,y)$, and we assume that $\rho > 0$ (which complies with the preliminaries of Theorem~\ref{Prop:TestingWithZeroRate}). As $(\vct{u},\vct{v})$ range over $\Gamma^k\cC\times\Gamma^k\cF$, each element $(\vct{x},\vct{y}) \in \cC\times\cF$ will be chosen as the closest neighbor at most $|\Gamma^k(\vct{x})|\times|\Gamma^k(\vct{y})|$ times. Thus,
\begin{equation}
P_{\bX\bY}^n\big(\Gamma^k\cC\times\Gamma^k\cF\big) \leq \rho^{-2k}|\Gamma^k(\vct{x})|\times|\Gamma^k(\vct{y})|P^n_{\bX\bY}(\cC\times\cF) \ .
\end{equation}
From \cite[Lemma 5.1]{Csiszar-Korner-2011} we have:
\begin{equation}
|\Gamma^k_n(\vct{x})| \leq \exp\left[n\left(h_2\left(\frac{k_n}{n}\right) + \frac{k_n}{n}\log|\cX|\right)\right] \equiv \exp(n\zeta'_n) \ ,
\end{equation}
with $h_2(\cdot)$ being the \emph{binary entropy} function and $\zeta'_n \to 0$ as $n \to \infty$. This implies that
\begin{equation}
P_{\bX\bY}^n\big(\Gamma^k\cC\times\Gamma^k\cF\big) \leq \exp(n\zeta_n)P^n_{\bX\bY}(\cC\times\cF) \ ,
\end{equation}
with $\zeta_n \coloneqq  2h_2\left(\frac{k_n}{n}\right) + \frac{k_n}{n}\log(|\cX|\cY|) - \frac{2k_n}{n}\log\rho \to 0$ as $n \to \infty$. Combining this with~\eqref{eq:SameExponentH1}, we finally get
\begin{align}
P_{\bX\bY}^n(\cC\times\cF) &\geq \exp(-n\zeta_n)P^n_{\bX\bY}\big(\Gamma^k\cC\times\Gamma^k\cF\big) \\
& \geq \frac{1}{2}\exp\left[-n(\zeta_n+\epsilon_n)\right]P^n_{\bX\bY}(\cT_{\hat{X}\hat{Y}})\nonumber\\
&\geq \frac{(n+1)^{|\cX|\cY|}}{2}\exp\left[-n\big(\mathcal{D}(P_{\hat{X}\hat{Y}}\|P_{\bX\bY}) + \zeta_n + \epsilon_n\big)\right]\nonumber\\
&\geq \exp\left[-n\big(\mathcal{D}(P_{\hat{X}\hat{Y}}\|P_{\bX\bY}) + \mu_n\big)\right] \ ,\nonumber
\end{align}
and $\mu_n \equiv \mu_n(\rho,\epsilon,M_n,N_n,|\cX|,|\cY|) \to 0$ as $n \to \infty$.

The previous conclusion is true for \emph{some type} $P_{\hat{X}\hat{Y}}$ over the range of all types that are $\eta$-typical for the measure $P_{\tX\tY}$. As the divergence functional $\mathcal{D}(\cdot\|\cdot)$ is convex and bounded, it is also uniformly continuous. It follows that we can find a sequence $\mu'_n \equiv  \mu'_n(\rho,|\cX|,|\cY|)$ such that $|P_{\hat{X}\hat{Y}} - P_{\tX\tY}| \leq \eta = o(n^{-\frac{1}{3}})$ implies that $|\mathcal{D}(P_{\hat{X}\hat{Y}}\|P_{\bX\bY}) - \mathcal{D}(P_{\tX\tY}\|P_{\bX\bY})| \leq \mu'_n$. Hence
\begin{equation}
P_{\bX\bY}^n(\cC\times\cF) \geq \exp\left[-n\big(\mathcal{D}(P_{\tX\tY}\|P_{\bX\bY}) + \mu_n + \mu'_n\big)\right] \ ,
\end{equation}
and consequently
\begin{align}
-\liminf\limits_{n \to \infty}\frac{1}{n}\log P^n_{\bX\bY}(\cA_n)& = -\lim\limits_{n \to \infty}\frac{1}{n}\log \beta_n(R=0,\epsilon\,|K=1) \\
& \leq \mathcal{D}(P_{\tX\tY}\|P_{\bX\bY}) \ ,\nonumber
\end{align}
and the RVs $\tX\tY$ are chosen from the set $\mathscr{L}_0$, which concludes the proof.
\end{proof}


%

\bibliographystyle{imsart-number}
\bibliography{HypothesisTestingBib}
\end{document}